\newcommand{\revise}[1]{{#1}}
\newcommand{\set}[1]{\{#1\}}
\newcommand{\inset}[2]{\{#1 \mid #2\}}
\newcommand{\size}[1]{\left|#1\right|}
\newcommand{\order}[1]{O(#1)}
\newtheorem{corollary}[theorem]{Corollary}
\newtheorem{observation}[theorem]{Observation}
\newcommand{\etal}{et al.\xspace}
\newcolumntype{\expand}{}
\long\@namedef{NC@rewrite@\string\expand}{\expandafter\NC@find}
\newcommand{\problemtitle}[1]{\gdef\@problemtitle{#1}}% Store problem title
\newcommand{\probleminput}[1]{\gdef\@probleminput{#1}}% Store problem input
\newcommand{\problemoutput}[1]{\gdef\@problemoutput{#1}}% Store problem question
  \par\addvspace{.5\baselineskip}
  \par\addvspace{.5\baselineskip}
\newcommand{\comp}[1]{{\mu}(#1)}
\journal{Discrete Applied Mathematics}
\begin{document}
\begin{frontmatter}

\title{Efficient Constant-Factor Approximate Enumeration of Minimal Subsets for Monotone Properties with Weight Constraints}

\author[1]{Yasuaki Kobayashi}
\ead{koba@ist.hokudai.ac.jp}
\author[2]{Kazuhiro Kurita}
\ead{kurita@i.nagoya-u.ac.jp}
\author[3]{Kunihiro Wasa}
\ead{wasa@hosei.ac.jp}

\address[1]{Hokkaido University, Sapporo, Japan}
\address[2]{Nagoya University, Nagoya, Japan}
\address[3]{Hosei University, Koganei, Japan}

% \authorrunning{Y. Kobayashi, K. Kurita, and K. Wasa}
% \Copyright{Yasuaki Kobayashi, Kazuhiro Kurita, and Kunihiro Wasa}
% \ccsdesc[100]{Mathematics of computing~Discrete mathematics~Combinatorics~Combinatorial algorithms}%TODO mandatory: Please choose ACM 2012 classifications from https://dl.acm.org/ccs/ccs_flat.cfm 
% \relatedversion{The authors have been submitted the draft of the paper to arXiv. }

\newcommand{\tw}{{\rm tw}}
\newcommand{\cw}{{\rm cw}}

% \maketitle
\begin{abstract}
    A property $\Pi$ on a finite set $U$ is \emph{monotone} if for every $X \subseteq U$ satisfying $\Pi$, every superset $Y \subseteq U$ of $X$ also satisfies $\Pi$.
    Many combinatorial properties can be seen as monotone properties.
    The problem of finding a subset of $U$ satisfying $\Pi$ with the minimum weight is a central problem in combinatorial optimization.
    Although many approximate/exact algorithms have been developed to solve this kind of problem on numerous properties, a solution obtained by these algorithms is often unsuitable for real-world applications due to the difficulty of building accurate mathematical models on real-world problems.
    A promising approach to overcome this difficulty is to \emph{enumerate} multiple
    small solutions rather than to \emph{find} a single small solution.
    To this end, given a weight function $w: U \to \mathbb Q_{> 0}$ and $k \in \mathbb Q_{> 0}$, we devise algorithms that \emph{approximately} enumerate all minimal subsets of $U$ with weight at most~$k$ satisfying $\Pi$ for various monotone properties $\Pi$, where ``approximate enumeration'' means that algorithms output all minimal subsets satisfying $\Pi$ whose weight is at most~$k$ and may output some minimal subsets satisfying $\Pi$ whose weight exceeds $k$ but is at most $ck$ for some constant $c \ge 1$.
    These algorithms allow us to efficiently enumerate minimal vertex covers, minimal dominating sets in bounded degree graphs, minimal feedback vertex sets, minimal hitting sets in bounded rank hypergraphs, etc., of weight at most $k$ with constant approximation factors.
\end{abstract}

\begin{keyword}
    Approximate enumeration, Output-sensitive, Monotone property, Supergraph technique
\end{keyword}
\end{frontmatter}
% \linenumbers

\section{Introduction}

% general importance of enumeration, including complexity, related work, etc..
Let $U$ be a finite set. A property $\Pi$ on $U$ is \emph{monotone} if for every $X \subseteq U$ satisfying $\Pi$, every superset $Y \subseteq U$ of $X$ also satisfies $\Pi$.
Many basic combinatorial properties, such as being a spanning subgraph of a graph, the linear dependency of columns of a matrix, and being a transversal of a hypergraph, can be seen as a monotone property on suitable sets $U$.
Thus, many combinatorial optimization problems can be formulated as a minimization problem with a monotone property over a finite set $U$.
Numerous minimization problems with monotone properties, such as the minimum vertex cover problem and the minimum dominating set problem, are proven to be intractable~\cite{garey1979computers}.
The concept of approximation algorithms is one of the most popular approaches to this difficulty, which aims to find a ``good'' solution with a provable guarantee on its weight.
However, in many real-world applications, such a ``good'' solution may be inadequate, even if one can find a best one, due to ambiguous ``true'' objectives and/or informal unwritten constraints.
To tackle these issues, enumerating multiple good solutions would be a promising approach.  
In this context, there are several attempts to enumerate multiple good solutions rather than to compute a single best solution, which have received considerable attention in the last decades.

One of the best-known attempts to achieve this goal is \emph{$K$-best enumeration}~\cite{Murty:Letter:1968,Lawler1972,Gabow:Two:1977,Eppstein:k-best:2016}.
An algorithm is called a \emph{$K$-best enumeration algorithm} if it generates $K$ distinct solutions and there is no solution (strictly) better than those generated by the algorithm. 
Obviously, the optimization counterpart corresponds to the case $K = 1$.
Many $K$-best enumeration algorithms have been developed in the literature: e.g., spanning trees~\cite{Gabow:Two:1977}, $s$-$t$ paths~\cite{Hoffman:Method:1959,Lawler1972}, $s$-$t$ cuts~\cite{Vazirani:Suboptimal:1992}, and (weighted) perfect matchings~\cite{Murty:Letter:1968} (see also~\cite{Eppstein:k-best:2016} for more information).  
Among others, enumerating multiple solutions with some specific order has attracted special interests in the field of database research, which they call \emph{ranked enumeration}~\cite{DBLP:conf/pods/RavidMK19,DBLP:conf/sigmod/TziavelisGR20,DBLP:journals/corr/abs-1902-02698}.
When we wish to enumerate $K$ solutions in the non-decreasing order of their quality, this is equivalent to $K$-best enumeration.

% obstacle 1; underlying minimization problems should be tractable
A common obstacle to applying these approaches to enumerate multiple (best) solutions with a monotone property $\Pi$ is that the underlying minimization problem has to be tractable since we need to find an optimal solution efficiently in the first place.
As in Table~\ref{tab:summary}, however, many problems of finding a minimum subset satisfying $\Pi$, which we focus on in this paper, are \NP-hard.
%
% obstacle 2: produces non-minimal solutions in the approximation setting
One of the possible solutions to avoid this obstacle is to consider approximation orders~\cite{DBLP:journals/jcss/FaginLN03,DBLP:conf/pods/KimelfeldS06,DBLP:conf/icde/AjamiC19}.
They relaxed the rigorous definition of the output order in ranked enumeration by considering an \emph{approximation order}.
More precisely, an enumeration algorithm outputs solutions in \emph{$\theta$-approximation order} for some $\theta \ge 1$ if, for every two solutions output by the algorithm, the cost of the former solution is not worse than $\theta$ times the cost of the latter solution.
This notion was first introduced by Fagin \etal~\cite{DBLP:journals/jcss/FaginLN03} and enables us to find an initial solution by polynomial-time approximation algorithms.
In particular, Ajami and Cohen~\cite{DBLP:conf/icde/AjamiC19} devised an algorithm for enumerating (weighted) set covers in $H_d$-approximation order in polynomial delay, where~$d$ is the largest size of a hyperedge and $H_d$ is the sum of the first $d$ numbers in the harmonic series.
Here, an enumeration algorithm runs in \emph{polynomial delay} if 
the followings are upper-bounded by a polynomial in the input size: 
(1) the maximum time interval between any two consecutive outputs, 
(2) the preprocessing time, and (3) the postprocessing time.     
However, the technique used in \cite{DBLP:conf/icde/AjamiC19} may produce \emph{non-minimal} solutions. 
These non-minimal solutions can be considered redundant in the sense that for a smallest solution $X$, there are exponentially many ``approximate'' solutions containing $X$, which can be easily obtained from $X$.
This redundancy highly affects the overall performance of enumeration algorithms. 
They extended their algorithm so that it enumerates only minimal solutions in approximation order.
However, the running time of this extension is no longer proven to be upper-bounded by a polynomial in the number of minimal solutions.
Our goal here is to develop enumeration algorithms such that the outputs satisfy the approximation quality requirement and the running time is upper-bounded by the number of solutions.

\subsection{Summary of our results}
% Explanation: how we evaluate the algorithms 
In this paper, given a monotone property $\Pi$ on a finite set $U$, a weight function $w: U \to \mathbb Q_{>0}$, and $k \in \mathbb Q_{> 0}$, our aim is to design algorithms that enumerate all the \emph{inclusion-wise minimal} subsets $S$ of $U$ satisfying $\Pi$ whose weight $w(S)$ is at most $k$.
Here, we define $w(S) = \sum_{e \in S}w(e)$.
We measure the running time of these algorithms in an output-sensitive way (see \Cref{sec:prelim} and/or \cite{Johson:Yanakakis:Papadimitriou:IPL:1988,Strozecki:Enumeration:2019} for more information).
However, because our primary focus of this paper is on monotone properties whose minimization versions are \NP-hard, there seems to be no hope of developing even a polynomial-delay or incremental-polynomial time algorithm for enumerating subsets satisfying those properties with weight at most $k$.
Therefore, this paper introduces yet another concept to enumerate multiple good solutions.

% Concept of approximation enumeration
Suppose that we are given a function $f: 2^U \to \mathbb{Q}_{>0}$ and $k \in \mathbb{Q}_{>0}$.
Let $\mathcal R$ be the collection of all feasible solutions in $2^{U}$ and $\mathcal S = \{S \in \mathcal R: f(S) \le k\}$.
An enumeration algorithm $\mathcal A$ \emph{approximately enumerates} $\mathcal S$ if it enumerates solutions $\mathcal R'$ with $\mathcal S \subseteq \mathcal R' \subseteq \mathcal R$ without duplication.
In other words, we allow enumeration algorithms to output some feasible solutions in $\mathcal R \setminus \mathcal S$ but forbid them to output infeasible solutions in $2^{U} \setminus \mathcal R$. 
The running time of an approximate enumeration algorithm $\mathcal A$ is measured by the input size and the cardinality of $\mathcal R'$.
Moreover, to evaluate the quality of solutions enumerated by $\mathcal A$, we define the \emph{approximation factor} of $\mathcal A$ as the ratio $\max_{S \in \mathcal R'}\frac{f(S)}{k}$. 
Given this, we call $\mathcal A$ an \emph{$\alpha$-approximate enumeration algorithm} when the approximation factor of $\mathcal A$ is at most $\alpha$. 

The definition of approximate enumeration algorithms {is natural} in a certain sense.
{
    Enumeration algorithms with $\theta$-approximation order mentioned in the previous subsection can be regarded as $\theta$-approximate enumeration algorithms.
    Let $(S_1, S_2, \ldots )$ be an output sequence of an enumeration algorithm with $\theta$-approximation order.
    By stopping the algorithm when it outputs a solution $S_i$ with the weight more than $k\theta$,
    the algorithm outputs all solutions with weight at most $k$ and does not output any solutions with weight more than $k\theta$, meaning that it is a $\theta$-approximate enumeration algorithm.
    Moreover, $K$-best enumeration algorithms can be regarded as $1$-approximate enumeration algorithms
    if the number $K$ of solutions of weight at most $k$ is known in advance.
    Notice that almost all of $K$-best enumeration algorithms in the survey~\cite{Eppstein:k-best:2016} output solutions in $1$-approximate order, that is, non-decreasing or non-increasing order.
    Thus, these algorithms can be regarded as $1$-approximate enumeration algorithms by just stopping execution when it outputs a solution of weight more than $k$ without using the value~$K$.}
Enumerating solutions with cardinality constraints are discussed in the field of parameterized algorithms~\cite{Fernau:parameterized:2002,Creignou:paradigms:2017}.
They focused on problems of enumerating solutions of cardinality at most $k$ and gave fixed-parameter tractable enumeration algorithms for those problems with respect to parameter $k$.
These algorithms are indeed considered as $1$-approximate enumeration algorithms in our setting.

In this paper, we develop two frameworks for designing efficient approximate enumeration algorithms for enumerating minimal subsets satisfying $\Pi$ for several monotone properties $\Pi$ with constant approximation factors.
The description of these frameworks is general, and hence we can derive approximate enumeration algorithms for many monotone properties in a unified way. 
We summarize some problems to which we can apply our frameworks and known approximation and enumeration (without weight constraints) results in \Cref{tab:summary}.
We would like to emphasize that our enumeration algorithms never output a non-minimal feasible solution even if its weight is at most $k$, {and work} efficiently even if $k$ is large in contrast to parameterized enumeration algorithms~\cite{Fernau:parameterized:2002,Creignou:paradigms:2017}.

We mainly focus on linear weight functions, that is, the weight of a solution is defined to be the sum of weights of elements in it.
However, our frameworks allow us to enumerate approximate solutions on more general weight functions.
For instance, we obtain a polynomial-delay $3$-approximate enumeration algorithm for minimal vertex covers with monotone submodular cost.

\newclass{\polydelay}{DelP}
\newclass{\incpoly}{IncP}
\newclass{\outputpoly}{OutP}

        \begin{table}[ht]
        \renewcommand{\baselinestretch}{0.9}
        \footnotesize
        \centering
        \def\arraystretch{1.2}%  1 is the default, change whatever you need
        \caption{Summary of approximation factors of known polynomial-time approximation algorithms for finding a \emph{minimum weight} $\Pi$-set, known results for enumerating \emph{minimal} $\Pi$-sets, and our results. %E.g., in Section~\ref{sec:framework}, we show a 3-approximate enumeration algorithm with polynomial delay for minimal vertex covers.
        The approximation factors marked by $^*$ are for unweighted graphs, and these marked by $\dagger$ are obtained by reducing to the minimum weight $d$-hitting set problem. ``\polydelay'' and ``\incpoly'' stand for ``polynomial delay'' and ``incremental polynomial time'', respectively. }
        \begin{tabular}{ll|c|l|c|l}\hline
            \multicolumn{2}{c|}{Property $\Pi$}             & approx.                                                                                                           &  \multicolumn{1}{c|}{enum.}                                & \multicolumn{2}{c}{Our results}\\\hline
            \multirow{6}{*}{Sect. \ref{sec:framework}}            & {\sc Vertex Cover}                                                                                                & $2$                                                   & \polydelay~\cite{DBLP:journals/siamcomp/TsukiyamaIAS77}                          & $3$                        & \polydelay   \\
                                                            & {\sc Bounded Degree-$d$ Deletion}                                                                                 & $2 + \ln d$ \cite{OKUN2003231}                        & \polydelay~\cite{DBLP:journals/jcss/CohenKS08,DBLP:journals/corr/abs-2004-09885} & $3+\ln d$                  & \polydelay   \\
                                                            & {\sc Cluster Deletion}                                                                                            & $2$ {\cite{Aprile2023-nn}}                    & \polydelay~\cite{DBLP:journals/jcss/CohenKS08,DBLP:journals/corr/abs-2004-09885} & $3$                     & \polydelay   \\
                                                            & {\sc Split Deletion}                                                                                              & $2+\varepsilon$\cite{DBLP:conf/icalp/LokshtanovMPP020,Drescher:simple:2020} & \polydelay~\cite{DBLP:journals/jcss/CohenKS08,DBLP:journals/corr/abs-2004-09885} & $3 + \varepsilon$          & \polydelay   \\
                                                            & {\sc Pseudo Split Deletion}                                                                                       & $4^\dagger$                                           & \polydelay~\cite{DBLP:journals/jcss/CohenKS08,DBLP:journals/corr/abs-2004-09885} & $5$                        & \polydelay   \\
                                                            & {\sc Threshold Deletion}                                                                                          & $4^\dagger$                                           & \polydelay~\cite{DBLP:journals/jcss/CohenKS08,DBLP:journals/corr/abs-2004-09885} & $5$                        & \polydelay   \\
            \hline
            Thm. \ref{thm:approx-enum-sfd}                  & {\sc Star Forest Edge Deletion}                                                                                   & $3^\dagger$            & \polydelay                                                                       & $4$                        & \polydelay   \\
            \hline
            Thm. \ref{thm:approx-enum-dom}                  & \hspace{-0.2cm}\begin{tabular}{l}{\sc Dominating Set in}\\\hspace{1.6cm}{\sc Degree-$\Delta$ Graphs}\end{tabular} & $O(\log \Delta)$~\cite{Chvatal:Greedy:1979}           & \polydelay~\cite{Kante:Enumeration:2011}                                         & $O(\log \Delta)$           & \polydelay   \\
            \hline
            \multirow{5}{*}{Cor.~\ref{cor:enum-treewidth}}  & {\sc Star Forest Vertex Deletion}                                                                                 & $4^\dagger$            & \incpoly~\cite{DBLP:journals/corr/abs-2004-09885}                                        & $6$                        & \incpoly     \\
                                                            & {\sc Feedback Vertex Set}                                                                                         & 2\cite{Becker:Optimization:1996}                      & \polydelay~\cite{DBLP:journals/dam/SchwikowskiS02}                               & $4$                        & \incpoly     \\
                                                            & {\sc {Caterpillar-Forest Deletion}}                                                                                        & $7^*$ \cite{Philip:Quartic:2010}                        & \incpoly                                                                         & $9^*$                        & \incpoly     \\
                                                            & {\sc Planar-$\mathcal F$ Deletion}                                                                                & $O(1)^*$\cite{Fomin:Planar:2012}                        & \incpoly                                                                         & $O(1)^*$                     & \incpoly     \\
                                                            & {\sc Treewidth-$\eta$ Deletion}                                                                                   & $O(1)^*$\cite{Fomin:Planar:2012}                        & \incpoly                                                                         & $O(1)^*$                     & \incpoly     \\
            \hline
            \multirow{4}{*}{Cor.~\ref{cor:enum-cliquewidth}}& {\sc Block Deletion}                                                                                              & $4^*$ \cite{Agrawal:faster:2016}                        & \incpoly                                                                         & $6^*$                        & \incpoly     \\
                                                            & {\sc Bicluster Deletion}                                                                                          & $4^\dagger$                                           & \incpoly~\cite{DBLP:journals/jcss/CohenKS08} & $6$                        & \incpoly     \\

                                                            & {\sc Cograph Deletion}                                                                                            & $4^\dagger$                                           & \polydelay~\cite{brosse2020efficient}                                            & $6$                        & \incpoly     \\
                                                            & {\sc Trivially Perfect Deletion}                                                                                  & $4^\dagger$                                           & \polydelay~\cite{DBLP:journals/corr/abs-2004-09885}                              & $6$                        & \incpoly     \\
            \hline
            Thm. \ref{theo:dhit}                            & {\sc $d$-Hitting Set}                                                                                             & $d$ \cite{Bar-Yehuda:Linear:1981}  & \incpoly~\cite{DBLP:conf/latin/BorosEGK04}                                       & $\frac{(d + 4)(d - 1)}{2}$ & \incpoly  \\
             \hline
            Thm. \ref{theo:apx:eds}                         & {\sc Edge Dominating Set}                                                                                         & $2$\cite{Fujito:approximating:2017}             & \polydelay~\cite{DBLP:conf/wads/KanteLMNU15}                                     & $5$                        & \polydelay   \\
            \hline
            Thm.~\ref{theo:apx:st}                          & {\sc Steiner Subgraph}                                                                                            & $1.39$ \cite{Byrka:Steiner:2013}                      & \polydelay~\cite{DBLP:journals/is/KimelfeldS08}                                                                         & $2.39$                     & \polydelay     \\
            \hline
            \end{tabular}\label{tab:summary}
        \renewcommand{\baselinestretch}{1}
    \end{table}

As further algorithmic contributions, we give polynomial-delay approximate enumeration algorithms with constant approximation factors for two specific properties, {{\sc Edge Dominating Set}} and {\sc Steiner Subgraph}\footnote{In our setting, every minimal Steiner subgraph is a (minimal) Steiner tree and hence our approximate enumeration algorithm enumerates (minimal) Steiner trees only.}. 
For these properties, our framework for obtaining a polynomial-delay running time bound cannot be applied directly.
To obtain polynomial-delay bounds with constant approximation factors, {we instead present problem-specific techniques, which may be of independent interest.}

\subsection{Technical overview of our framework}
Although the concept of our approximate enumeration {is new}, the approach to this is rather well known.
We employ the \emph{supergraph technique}, which is frequently used in designing enumeration algorithms in the literature~\cite{DBLP:journals/jcss/CohenKS08,DBLP:conf/stoc/ConteU19,DBLP:journals/dam/SchwikowskiS02,DBLP:journals/algorithmica/KhachiyanBBEGM08,DBLP:journals/corr/abs-2004-09885}.
In this technique, we consider a directed graph $\mathcal G = (\mathcal V, \mathcal E)$ whose node set $\mathcal V$ corresponds to the set of solutions.
The key to designing an enumeration algorithm with the supergraph technique is to ensure that, by appropriately defining the arc set $\mathcal E$, $\mathcal G$ is strongly connected: If $\mathcal G$ is strongly connected, one can enumerate all the nodes of $\mathcal G$ by traversing its arcs from an arbitrary node.
{Cohen, Kimelfeld, and Sagiv~\cite{DBLP:journals/jcss/CohenKS08} invented a general technique to ensure the strong connectivity of $\mathcal G$ for enumerating maximal induced subgraphs with hereditary properties or connected hereditary properties.}
In their technique, they consider a subproblem of an enumeration problem, called an \emph{input-restricted problem}, and if one can efficiently solve this subproblem, they show that the entire enumeration problem can be solved efficiently as well.
In particular, if the input-restricted problem for a monotone property $\Pi$ can be solved in time polynomial in the input size, then we say that $\Pi$ has a \emph{CKS property}. Note that as they used this term for hereditary properties on graphs, we slightly abuse this notion for our purpose. 
{Recently}, Cao~\cite{DBLP:journals/corr/abs-2004-09885} comprehensively discussed the applicability of this technique. 

% discuss the difficulty of size constrained  version of input-restricted problems
To apply this technique to our problem setting, we need to define a supergraph whose node set contains all minimal solutions of weight at most~$k$.
However, there are two obstacles to this: (1) we need to find a ``seed solution'' of weight at most~$k$, which is essentially difficult to obtain, and (2) we need to ensure the strong connectivity of the supergraph defined on the set of minimal solutions of weight at most $k$.
Fortunately, in our approximation setting, we do overcome these obstacles: (1) we can use known polynomial-time approximation algorithms for finding a ``seed solution'', and (2) for every two ``small'' minimal solutions, there is a path between these solutions whose ``internal solutions'' are all small in the supergraph defined by the weight constrained version of input-restricted problems.
Given this, it suffices to show that input-restricted problems with weight constraints can be solved efficiently.

More precisely, given a minimal subset $X$ of~$U$ satisfying $\Pi$ and $x \in X$, the input-restricted problem requires to enumerate all minimal subsets $R$ of $U$ with $w(R) \le k$ such that $(X \setminus \{x\}) \cup R$ satisfies $\Pi$.
Without the weight constraint on $R$, it is the original version of the input-restricted problem defined by Cohen \etal~\cite{DBLP:journals/jcss/CohenKS08}.
% say something about incremental poly algo 
However, CKS properties are rather restrictive.
It is easy to observe that many monotone properties are not CKS properties.
In particular, the number of solutions of input-restricted problems can be exponential in the input size.
For such a property, Cohen \etal~\cite{DBLP:journals/jcss/CohenKS08} showed that if the solutions of the input-restricted problem can be enumerated in incremental polynomial time, the entire enumeration algorithm also runs in incremental polynomial time as well.
We develop approximate enumeration algorithms along this line.
To solve weight-constrained input-restricted problems, we can intensively exploit problem-specific structures.
For instance, to solve the {weight-constrained} input-restricted problem for minimal feedback vertex sets, it suffices to show that, given a vertex-weighted graph $G$, a minimal feedback vertex set $X$ of $G$, and $x \in X$, there is an incremental-polynomial time algorithm for enumerating minimal feedback vertex sets $R$ with $x \notin R$ and $w(R) \le k$ in the graph $G'$ obtained from $G$ by deleting $(X \setminus \{x\})$.
The key observation here is that $G'$ has treewidth at most two and hence we can easily enumerate such sets by dynamic programming over tree decompositions. 

For {\sc Edge Dominating Set} and {\sc Steiner Subgraph},
we exploit a new supergraph technique that ensures the reachability from a ``seed solution'' to every small solution by defining {an} elaborated neighborhood relation in $\mathcal G$, which is inspired by polynomial-delay enumeration algorithms for minimal induced trivially perfect graphs~\cite{DBLP:journals/corr/abs-2004-09885} and minimal induced Steiner subgraphs in claw-free graphs~\cite{kobayashi2020polynomial}.
Let us note that Kimenfeld and Sagiv~\cite{DBLP:conf/pods/KimelfeldS06} also gave a polynomial-delay algorithm for enumerating minimal Steiner trees with $2.39$-approximation order, which is the same approximation ratio of our approximate enumeration algorithm.
However, we believe that our algorithm is of some interest to readers as the same approximation ratio is achievable with our framework.

\section{Preliminaries}\label{sec:prelim}

Let $G = (V, E)$ be a graph without self-loops and multiple edges. % unless otherwise noted. 
The vertex set and edge set of $G$ are denoted by $V(G)$ and $E(G)$, respectively.
For a vertex $v \in V$, the neighborhood or the set of neighbors of $v$ is denoted by $N_G(v)$.
The set of incident edges of $v$ is denoted by $\Gamma_G(v)$. 
The maximum degree of a vertex in $G$ is denoted by $\Delta_G$, and we call it the maximum degree of $G$.
If the graph $G$ is clear from the context, we omit the subscript $G$ in these notations. 
A graph $H = (U, F)$ is a subgraph of $G$ if $U \subseteq V$ and $F \subseteq E$.
Moreover, $H$ is an induced subgraph of $G$ if $U \subseteq V$ and $F = \inset{\set{u, v} \in E}{u, v \in U}$. 
{
    We denote the induced subgraph $H$ as $G[U]$.
    For $U \subseteq V$, we denote the induced subgraph $G[V \setminus U]$ as $G - U$.
    Similarly, for $F \subseteq E$, we denote by $G - F$ the graph obtained from $G$ by deleting the edges in $F$, that is, $G - F = (V, E \setminus F)$.
}

Let $U$ be a finite set and let $\Pi$ be a monotone property over $U$.
Let $w: U \to \mathbb Q_{>0}$ be a weight function. 
For $X \subseteq U$, we define $w(X)$ as $\sum_{x \in X}w(x)$.
We call $w(X)$ the \emph{weight of $X$}.
A set $X \subseteq U$ is called a \emph{$\Pi$-set} if it satisfies $\Pi$.
We say that $X \subseteq U$ is a \emph{$(\Pi, w, k)$-set} if it is a $\Pi$-set and $w(X)$ is at most~$k$.
When $w$ is the unit weight function, we simply call it \emph{$(\Pi, k)$-set}.
A subset $X$ is a \emph{minimal $(\Pi, w, k)$-set} (resp. \emph{minimal $(\Pi, k)$-set}) if no proper subset $X'$ of $X$ is a $(\Pi, w, k)$-set (resp $(\Pi, k)$-set). 

We say that an enumeration algorithm runs in \emph{polynomial delay} if 
the following are {upper-bounded} by a polynomial in the input size $\size{U}$: (1) the maximum time interval between any two consecutive outputs, (2) the preprocessing time, and (3) the postprocessing time. 
An enumeration algorithm runs in \emph{incremental polynomial time} if the $i^\text{th}$ solution can be obtained in time $(i + |U|)^{O(1)}$.
In other words, given a set $\mathcal O \subseteq 2^{U}$ of solutions generated so far, the algorithm runs in time $(|\mathcal O| + |U|)^{O(1)}$ to find a new solution not contained in $\mathcal O$ or decide that every solution has already been contained in $\mathcal O$.
If the total running time of an enumeration algorithm is {upper-bounded} by a polynomial in both the input size and the number of outputs, we say that the algorithm runs in \emph{output-polynomial time} or \emph{polynomial total time}.

Now, we define our problem as follows: 

\begin{problem}
    \problemtitle{Minimal $(\Pi, w, k)$-Sets Enumeration Problem}
    \probleminput{A finite set $U$ with monotone property $\Pi$, a weight function $w: U \to \mathbb Q_{>0}$, and $k \in \mathbb Q_{> 0}$\revise{.}}
    \problemoutput{All minimal $(\Pi, w, k)$-sets of $U$\revise{.}}
\end{problem}

If we allow $k$ to be arbitrarily large, then {\sc Minimal $(\Pi, w, k)$-Sets Enumeration Problem} is equivalent to a usual enumeration problem for minimal subsets satisfying $\Pi$.
However, we note that the problem is particularly difficult in general as mentioned in the previous section. 
To cope with this difficulty, we allow us to use a ``seed set'', which is an arbitrary minimal $(\Pi, w, ck)$-set for some constant $c > 1$, as input.
For several monotone properties $\Pi$, such as {\sc Vertex Cover}, finding a minimum $\Pi$-set can be approximated within a constant factor.
Given such a ``seed set'', we devise a polynomial-delay algorithm for enumerating minimal $\Pi$-sets containing all the minimal $(\Pi, w, k)$-sets and some minimal $\Pi$-sets enumerated by the algorithm may have weight exceeding $k$. 
Recall that the approximation factor~$\alpha$ of the algorithm is defined by $\alpha = \frac{\tilde{k}}{k}$, where $\tilde{k}$ is the maximum weight of a solution output by the algorithm. 
Formally, our goal is to solve the {\sc Minimal $(\Pi, w, k)$-Sets Enumeration Problem} approximately.

\section{General frameworks for approximate enumeration}\label{sec:framework}

Our frameworks are based on the supergraph technique.
In this technique, we consider a directed graph $\mathcal G = (\mathcal V, \mathcal E)$ whose node set $\mathcal V$ corresponds to the collection of minimal $\Pi$-sets.
The key to designing an enumeration algorithm with the supergraph technique is to ensure that, by appropriately defining the arc set $\mathcal E$, $\mathcal G$ is strongly connected.
To ensure the strong connectivity of $\mathcal G$, we define a ``measure'' between two $\Pi$-sets. 
This idea has been used for the enumeration of maximal induced subgraphs with hereditary properties. {The property $\Pi$ is \emph{hereditary} if, for any graph satisfying $\Pi$, every induced subgraph also satisfies $\Pi$~\cite{DBLP:journals/corr/abs-2004-09885,DBLP:conf/stoc/ConteU19,DBLP:journals/jcss/CohenKS08,DBLP:journals/dam/SchwikowskiS02}.
Moreover, this idea has also been used for the enumeration of minimal subsets for specific properties~\cite{DBLP:journals/algorithmica/KhachiyanBBEGM08,DBLP:conf/ipco/BorosEGK04}.
The property $\Pi$ is \emph{connected hereditary} if, for any \revise{connected} graph satisfying $\Pi$, every connected induced subgraph also satisfies $\Pi$~\cite{DBLP:journals/jcss/CohenKS08}.\ Cohen \etal~\cite{DBLP:journals/jcss/CohenKS08} studied} enumeration problems for \emph{maximal} induced subgraphs with hereditary properties or connected hereditary properties.
The crucial observation is as follows: 
For any pair $H, H'$ of maximal induced subgraphs satisfying some hereditary property $\Pi$, 
if we can define a neighborhood of $H$ such that the neighborhood contains $H''$ with $|V(H) \cap V(H')| < |V(H'') \cap V(H')|$, then 
 $\mathcal G$ becomes strongly connected since $H$ has a neighbor $H''$ that is ``closer'' than $H$ to $H'$, which implies there is a directed path from $H$ to $H'$ in $\mathcal G$ via the neighborhood relation.

We apply the essentially same idea to {\sc Minimal $(\Pi, w, k)$-Sets Enumeration Problem}.
For a minimal $\Pi$-set $S \subseteq U$, we develop an enumeration algorithm for enumerating minimal $\Pi$-sets $S_1, S_2, \ldots, S_t \subseteq U$ such that for every minimal $\Pi$-set $S'$, there is $S_i$ with $\size{S \cup S'} > \size{S_i \cup S'}$.
The key difference from \cite{DBLP:journals/jcss/CohenKS08} is that we use the cardinality of the union with $S'$ as a ``measure'' rather than that of the intersection, which is also important for approximation guarantees.
{
    The following observation is easy but essential.
    \begin{observation}\label{lem:union}
        Let $S$ and $S'$ be minimal $\Pi$-sets. 
        Then, $S = S'$ if and only if $\size{S} = \size{S \cup S'}$. 
    \end{observation}
}
% \begin{proof}
%     The only if part is trivial. 
%     Since $\size{S} = \size{S \cup S'}$, we have $S' \subseteq S$.
%     The minimality of $S$ implies that $S = S'$.
% \end{proof}

To design a $c$-approximation enumeration algorithm for {\sc Minimal $(\Pi, w, k)$-Sets Enumeration Problem}, 
our strategy is to construct a subgraph $\mathcal G'$ of $\mathcal G$ such that 
\begin{itemize}
    \item $\mathcal G'$ contains all the minimal $(\Pi, w, k)$-sets,
    \item \revise{each vertex in $\mathcal G'$ corresponds to a minimal $(\Pi, w, ck)$-set, and}
    \item from an arbitrary minimal $(\Pi, w, ck)$-set $S$ in $\mathcal G'$, there is a directed path to every minimal $(\Pi, w, ck)$-set in $\mathcal G'$,
\end{itemize}
where $c$ is a value depending on the property $\Pi$ and a known polynomial-time approximation algorithm for finding seed solution $S$.
This can be done by solving a weight-constrained version of the input-restricted problem, which is defined as follows.

\newcommand{\IRP}[3]{\textsc{$(#1, #2, #3)$-IRP}\xspace}

\begin{problem}
    \problemtitle{Input-Restricted Minimal $(\Pi, w, k)$-Sets Problem (\IRP{\Pi}{w}{k})}    
    \probleminput{A finite \revise{set} $U$ with monotone property $\Pi$, a weight function $w: U \to \mathbb Q_{> 0}$, $k \in \mathbb Q_{> 0}$, a minimal $\Pi$-set $X \subseteq U$, and an element $x \in X$.}
    \problemoutput{All minimal subsets $R \subseteq U \setminus \set{x}$ with $w(R) \le k$ such that $(X \setminus \set{x}) \cup R$ is a $\Pi$-set. }
\end{problem}
{The solution} of \IRP{\Pi}{w}{k} is denoted by $\mathcal S(U, \Pi, w, k, X, x)$.
Since \IRP{\Pi}{w}{k} is also an enumeration problem of a monotone property with a weight constraint, we can apply approximate enumeration algorithms to it.
Let $\mathcal A$ be an enumeration algorithm for \IRP{\Pi}{w}{k} with approximation factor $d$. 
The collection of sets output by $\mathcal A$ is denoted by $\mathcal S_{\mathcal A}(U, \Pi, w, k, X, x)$, that is, $\mathcal S_{\mathcal A}(U, \Pi, w, k, X, x)$ contains all inclusion-wise minimal $R \subseteq U$ with $w(R) \le k$ such that $(X \setminus \set{x}) \cup R$ is a $\Pi$-set and may contain some inclusion-wise minimal $R \subseteq U$ with $k < w(R) \le dk$ such that $(X \setminus \set{x}) \cup R$ is a $\Pi$-set. 
Note that for $R \in \mathcal S_{\mathcal A}(U, \Pi, w, k, X, x)$, $(X \setminus \{x\}) \cup R$ may not be a minimal $\Pi$-set.
If no confusion arises, we simply write $\mathcal S$ (resp. $\mathcal S_{\mathcal A}$) to denote $\mathcal S(U, \Pi, w, k, X, x)$ (resp. $\mathcal S_{\mathcal A}(U, \Pi, w, k, X, x)$).

A function $\comp{\tilde{X}}$ computes a minimal $\Pi$-set from a $\Pi$-set $\tilde{X}$ by greedily removing an element $e$ from $\tilde{X}$ as long as $\tilde{X}\setminus \set{e}$ is a $\Pi$-set. 
Note the resultant set of $\comp{\tilde{X}}$ is a subset of $\tilde{X}$ and 
if the membership of $\Pi$ can be checked in polynomial time, it is computed in polynomial time as well. 
The following lemmas show basic properties of $\mathcal{S}_{\mathcal A}$ and $\comp{\cdot}$. 

\begin{lemmarep}\label{lem:comp:uniqueness}
    Let $X \subseteq U$ be a minimal $\Pi$-set and let $x \in X$.
    For distinct $R, R' \in \mathcal S_{\mathcal A}$,
    $\comp{(X \setminus \{x\}) \cup R} \neq \comp{(X \setminus \{x\}) \cup R'}$.
\end{lemmarep}
\begin{proof}
    It is sufficient to prove that $R \subseteq \comp{(X \setminus \{x\}) \cup R}$ for $R \in \mathcal S_{\mathcal A}$.
    Suppose for contradiction that $R \cap \comp{(X \setminus \set{x}) \cup R} = T$ for some $T \subset R$.
    By the monotonicity of $\Pi$, $(X \setminus \{x\}) \cup T$ is a $\Pi$-set, contradicting to the minimality of $R$.
\end{proof}

\begin{lemmarep}\label{lem:input-restricted}
    Let $X \subseteq U$ be a minimal $\Pi$-set and let $x \in X$.
    Then, for every minimal $(\Pi, w, k)$-set $Y\subseteq U \setminus \set{x}$ excluding $x$, there is $R \in \mathcal S_{\mathcal A}$ such that $\comp{(X \setminus \set{x}) \cup R} \cup Y \subset X \cup Y$. 
\end{lemmarep}
\begin{proof}
    By the monotonicity of $\Pi$, $(X \setminus \set{x}) \cup Y$ is a $\Pi$-set. 
    Then, there is an inclusion-wise minimal set $R \subseteq Y$ such that $(X \setminus \set{x}) \cup R$ is a $\Pi$-set and $w(R) \le w(Y) \le k$, which implies that $R \in \mathcal S_{\mathcal A}$.
    Since $R \subseteq Y$ and $x \notin R$, we have
    \begin{linenomath}
        \begin{align*}
            X \cup Y \supset (X \setminus \{x\} \cup R) \cup Y \supseteq \comp{(X \setminus \set{x}) \cup R} \cup Y,
        \end{align*}
    \end{linenomath}
    completing the proof of the lemma.
\end{proof}

We define an arc from a minimal $\Pi$-set $X$ to a minimal $\Pi$-set $Z$ if $Z = \comp{(X \setminus \{x\}) \cup R}$ for some $x \in X$ and $R \in \mathcal S_{\mathcal A}$.
By \Cref{lem:comp:uniqueness,lem:input-restricted}, we can conclude that $\mathcal G$ has a directed path from every minimal $\Pi$-set $X$ to every minimal $(\Pi, w, k)$-set $Y$.
Moreover, every internal node $Z$ on the path is a minimal $(\Pi, w, w(X \cup Y))$-set since $Z \cup Y \subset X \cup Y$.
% Now, we are ready to describe our algorithm based on the supergraph technique shown in \Cref{alg:general}. 
% {
    We show that an approximate enumeration algorithm can be obtained by traversing a portion of $\mathcal G$.
    We give the details of the algorithm in \Cref{alg:general}.
% }

\begin{algorithm}[t]
    \SetKwInput{KwInput}{Input}
    \SetKwInput{KwOutput}{Output}
    \SetAlgoLined
    \KwInput{A minimal $\Pi$-set $S$ of $U$ with $w(S) \le ck$ for some $c \ge 1$}
    Output $S$ and add $S$ to $\mathcal Q$ and $\mathcal O$\;
    \While{$\mathcal{Q}$ is not empty}{
        Let $X$ be a minimal $\Pi$-set in $\mathcal Q$ and delete {$X$} from $\mathcal Q$\; 
        \ForEach{$x \in X$}{ \label{alg:general:comp:neighbors}
            Compute $\mathcal S_{\mathcal A}(U, \Pi, w, k, X, x)$ by a $d$-approximate enumeration algorithm $\mathcal A$\; 
            \ForEach{$R \in \mathcal S_{\mathcal A}(U, \Pi, w, k, X, x)$}{
                $Z \gets \comp{(X \setminus \set{x}) \cup R}$\; 
                \If{$Z \notin \mathcal O$}{
                    Output $Z$ and add $Z$ to $\mathcal O$\; \label{alg:general:output:X}
                    \If{$w(Z) \le w(S) + k$}{
                    Add $Z$ to $\mathcal Q$\; 
                    } 
                }
            }
        }
    }
    \caption{$(c + d + 1)$-approximate enumeration for {\sc Minimal $(\Pi, w, k)$-Sets Enumeration Problem.} }
    \label{alg:general}
\end{algorithm}

\begin{theoremrep}\label{theo:apx:output}
    Let $U$ be a finite set, let $\Pi$ be a monotone property over $U$, let $w$ be a weight function, and let $k$ be positive.
    Suppose that the membership of $\Pi$ can be decided in polynomial time and there is an output-polynomial time enumeration algorithm $\mathcal A$ for \IRP{\Pi}{w}{k} with approximation factor $d$. 
    Then, given a minimal $\Pi$-set of weight at most $ck$, \Cref{alg:general} solves {\sc Minimal $(\Pi, w, k)$-Sets Enumeration Problem} with approximation factor $c + d + 1$ in output polynomial time.
    Moreover, {\Cref{alg:general} runs in incremental polynomial time} if $\mathcal A$ runs in incremental polynomial time.
\end{theoremrep}

\begin{proof}
    We first show that \Cref{alg:general} outputs all the minimal $(\Pi, w, k)$-sets of $U$.
    \revise{In \Cref{alg:general}, we use $\mathcal Q$ as a queue to traverse the directed graph $\mathcal G$ and $\mathcal O$ as the set of nodes that are already output (or equivalently traversed).}
    Let $S$ be a minimal $(\Pi, w, ck)$-set, which is given as input.
    Let $\mathcal O_{\mathcal Q}$ be the collection of minimal $\Pi$-sets generated by \Cref{alg:general} that are added to $\mathcal Q$.
    More formally, $\mathcal O_{\mathcal Q} = \inset{O \in \mathcal O}{w(O) \le w(S) + k}$.
    Suppose for contradiction that there is a minimal $(\Pi, w, k)$-set $Y$ that is not contained in $\mathcal O$.
    We choose a minimal $\Pi$-set $X$ and a minimal $(\Pi, w, k)$-set $Y$ in such a way that
    \revise{
        \begin{enumerate}
            \item there is a minimal $(\Pi, w, w(S))$-set $X' \in \mathcal O_{\mathcal Q}$ such that $X \subseteq X' \cup Y$ and
            \item $\size{X \cup Y}$ is minimized over all minimal $\Pi$-sets $X \in \mathcal O_{\mathcal Q}$ subject to the condition 1.
            % \item $Y$ is not contained in $\mathcal O$.
        \end{enumerate}
    }
    \revise{%
    % From the assumption, a minimal $(\Pi, w, k)$-set $Y$ not contained in $\mathcal O$ exists. 
    % Otherwise, \Cref{alg:general} enumerates all minimal $(\Pi, w, k)$-sets.
    Let us note that such a set $X$ can be chosen as $S$ itself satisfies the condition 1.
    % Therefore, there is a minimal $\Pi$-set that satisfies the condition 1, and we can choose such $X$.}
    % We can choose such $X$ as $S \in \mathcal O_{\mathcal Q}$ and $S$ is indeed a minimal $(\Pi, w, w(S))$-set satisfying $S \subseteq S \cup Y$.}
    }
    Let $x \in X \setminus Y$.
    By Lemma~\ref{lem:input-restricted}, $\mathcal S_{\mathcal A}$ contains $R \subseteq U$ such that $(X \setminus \{x\}) \cup R$ is a $\Pi$-set and $\comp{(X \setminus \{x\}) \cup R} \cup Y \subset  X \cup Y$.
    \revise{Let $Z = \comp{(X\setminus \{x\}) \cup R}$.}
    Then, we have $\size{X \cup Y} > \size{Z \cup Y}$.
    \revise{Moreover, as $X \subseteq X' \cup Y$ and $(X\setminus \set{x}) \cup R \subseteq X \cup Y$, we have
    \begin{align*}
        Z = \comp{(X \setminus \set{x}) \cup R} \subseteq (X \setminus \set{x}) \cup R \cup Y \subseteq X\cup Y \subseteq X' \cup Y,
    \end{align*}}%
    which contradicts the choices of $X$ and $Y$ \revise{since $w(Z) \le w(X' \cup Y) \le w(S) + k$.}
    Since the approximation factor of $\mathcal A$ is~$d$ and every minimal $\Pi$-set $X$ added into $\mathcal Q$ has weight at most $(c+1)k$, every minimal $\Pi$-set $Z$ generated by \Cref{alg:general} has weight at most
    \begin{linenomath}
        \begin{align}\label{eq:weight-bound}
            w(Z) = w(\comp{(X \setminus \{x\})} \cup R) \le w(X) + w(R) \le (c + d + 1)k.     
        \end{align}
    \end{linenomath}
    Hence, the approximation factor of \Cref{alg:general} is $c + d + 1$.

    As for the time complexity, suppose first that $\mathcal S_{\mathcal A}$ can be computed in output polynomial time, namely $O((|U| + |\mathcal S_{\mathcal A}|)^t)$ for some constant $t$ for each minimal $\Pi$-set $X$ and $x \in X$.
    By Lemma~\ref{lem:comp:uniqueness}, for $R \in \mathcal S_{\mathcal A}$, $\comp{(X \setminus \{x\}) \cup R}$ is distinct from $\comp{(X \setminus \{x\}) \cup R'}$ for any $R' \in \mathcal S_{\mathcal A}$ with $R' \neq R$.
    Thus, we have $|\mathcal S_{\mathcal A}| \le |\mathcal O|$.
    Therefore, the total running time is {upper-bounded} by
    \begin{linenomath}
    \[
    \sum_{X \in \mathcal O}\sum_{x \in X}O((|U|+|\mathcal S_{\mathcal A}|)^t) = O((|U| + |\mathcal O|)^{t + 1}),
    \]
    \end{linenomath}
    which is polynomial in $|U| + |\mathcal O|$.
    
    Suppose next that $\mathcal A$ runs in incremental polynomial time.
    Let $\mathcal O' \subseteq \mathcal O$ be a set of minimal $\Pi$-sets that have already been generated by the entire algorithm at some point in the execution.
    Now, consider the running time of the loop at line~\ref{alg:general:comp:neighbors}.
    Since $\mathcal Q$ contains sets in $\mathcal O'$, the loop repeats at most $|\mathcal O'|$ times without outputting a minimal $\Pi$-set $Z$ at line~\ref{alg:general:output:X}. 
    Moreover, for each $X \in \mathcal O'$ and $x \in X$, either $\comp{(X \cup \{x\}) \cup R} \notin \mathcal O'$ for some $R \in \mathcal S_{\mathcal A}$ or $\comp{(X \setminus \{x\}) \cup R} \in \mathcal O'$ for all $R \in \mathcal S_{\mathcal A}$.
    By Lemma~\ref{lem:comp:uniqueness}, for each $R \in \mathcal S_{\mathcal A}$, $\comp{(X \setminus \{x\}) \cup R}$ is a unique solution among $\mathcal S_{\mathcal A}$.
    This implies that we can either find a solution $\comp{(X \setminus \{x\} \cup R)} \notin \mathcal O'$ or conclude that no $R \in \mathcal S_{\mathcal A}$ with $\comp{(X \setminus \{x\}) \cup R} \notin \mathcal O'$ exists in time $O(\sum_{1 \le i \le |\mathcal O'|} (i \cdot |U|)^t) = O(|U|^t|\mathcal O'|^{t+1})$, where $t$ is a constant depending on the running time of $\mathcal A$.
    Therefore, the delay of the running time after generating $\mathcal O'$ is {upper-bounded} by
    \begin{linenomath}
    \[
        \sum_{X \in \mathcal O'}\sum_{x \in X}O(|U|^t\cdot |\mathcal O'|^{t+1}) = O(|U|^{t+1}\cdot|\mathcal O'|^{t+2}),
    \]
    \end{linenomath}
    which yields an incremental polynomial time bound.
\end{proof}

% As applications of \Cref{theo:apx:output}, we give incremental-polynomial time approximate enumeration algorithms for many properties, such as {\sc Feedback Vertex Set}, {\sc Treewidth-$\eta$ Deletion}, {\sc Trivially Perfect Deletion}, and {\sc $d$-Hitting Set}. 
% The details can be found in Appendix~\ref{appendix:application}.

When we can solve \IRP{\Pi}{w}{k} in $|U|^{O(1)}$ time, we can simultaneously improve the running time and approximation factors of {\sc Minimal $(\Pi, w, k)$-Sets Enumeration Problem}.

\begin{toappendix}
\begin{algorithm}[t]
    \SetKwInput{KwInput}{Input}
    \SetKwInput{KwOutput}{Output}
    \SetAlgoLined
    \KwInput{A minimal $\Pi$-set $S$ of $U$ with $w(S) \le ck$ for some $c \ge 1$}
    Output $S$ and add $S$ to $\mathcal Q$ and $\mathcal O$\;
    \While{$\mathcal{Q}$ is not empty}{
        Let $X$ be a minimal $\Pi$-set in $\mathcal Q$\;
        Delete $X$ from $\mathcal Q$, add $X$ to $\mathcal O$, and output $X$\; 
        \ForEach{$x \in X$}{ 
            Compute $\mathcal S(U, \Pi, w, k, X, x)$\;
            \ForEach{$R \in \mathcal S(U, \Pi, w, k, X, x)$}{
                $Z \gets \comp{(X \setminus \set{x}) \cup R}$\; 
                \If{$Z \notin \mathcal O$}{
                    \If{$w(Z) < w(S) + k$}{
                    Add $Z$ to $\mathcal Q$\; 
                    } 
                }
            }
        }
    }
    \caption{A polynomial-delay $(c+1)$-approximate enumeration algorithm for {\sc Minimal $(\Pi, w, k)$-Sets Enumeration Problem}, where $\Pi$ is a CKS property. }
    \label{alg:general:cks}
\end{algorithm}
\end{toappendix}

\begin{theoremrep}\label{theo:apx:poly}
    Suppose that $\mathcal S$ can be enumerated in $\size{U}^{\order{1}}$ total time. 
    Then, one can solve {\sc Minimal $(\Pi, w, k)$-Sets Enumeration Problem} with approximation factor $c + 1$ in polynomial delay. 
\end{theoremrep}

\begin{proof}
    It is not hard to see that \Cref{alg:general:cks} outputs all minimal $(\Pi, w, k)$-sets by the same argument as in \Cref{theo:apx:output}.
    Since all steps inside {the} while loop at line~2 can be done in polynomial time, the delay is {upper-bounded} by a polynomial in $\size{U}$.
    Moreover, since we add only $(\Pi, w, (c+1)k)$-sets to $Q$, approximation factor of the algorithm is $c+1$.
\end{proof}

To apply \Cref{theo:apx:output,theo:apx:poly} to a specific monotone property $\Pi$, we need to develop a polynomial-time algorithm for solving \IRP{\Pi}{w}{k}. 
Cohen, Kimelfeld, and Sagiv~\cite{DBLP:journals/jcss/CohenKS08} and Cao~\cite{DBLP:journals/corr/abs-2004-09885} showed that 
{there are several graph properties $\Pi$ such that the \IRP{\Pi}{w}{k} can be solved exactly in polynomial time.
In what follows, we call such properties \emph{CKS properties}~\cite{DBLP:journals/corr/abs-2004-09885}.}
Cohen \etal and Cao showed that the following graph properties are CKS properties. 

\begin{proposition}[\cite{DBLP:journals/jcss/CohenKS08,DBLP:journals/corr/abs-2004-09885}]\label{prop:cks}
    Let $G = (V, E)$ be a graph. 
    Let $\mathcal C$ be one of the following classes of graphs; complete graphs, graphs with no edges, cluster graphs, complete bipartite graphs, complete $p$-partite graphs for any positive integer $p$, complete split graphs, split graphs, pseudo-split graphs, threshold graphs, and graphs of maximum degree $d$ for fixed $d$.
    Suppose that $\Pi$ is the property of being a vertex subset $S \subseteq V$ such that $G[V \setminus S]$ is in $\mathcal C$.
    Then, $\Pi$ is a CKS property.
\end{proposition}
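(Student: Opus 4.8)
The plan is to reduce the \textsc{Input-Restricted Minimal $(\Pi,k)$-Sets Problem} to a purely local question about a graph that is ``one vertex away'' from $\mathcal C$, and then to dispatch that question by a short structural argument for each class in the list. First observe that, since every listed class $\mathcal C$ is hereditary, $\Pi$ is indeed monotone, and a set $S \subseteq V$ is a minimal $\Pi$-set exactly when $G[V \setminus S]$ is a maximal induced subgraph of $G$ lying in $\mathcal C$. Now fix a minimal $\Pi$-set $X$ and $x \in X$, and let $Y \subseteq V \setminus \set{x}$ be minimal with $(X \setminus \set{x}) \cup Y$ a $\Pi$-set. Any element of $Y$ inside $X \setminus \set{x}$ is redundant, so by minimality $Y \subseteq V \setminus X$; writing $W = (V \setminus X) \cup \set{x}$ and $H = G[W]$, the requirement ``$(X \setminus \set{x}) \cup Y$ is a $\Pi$-set'' becomes exactly ``$H - Y \in \mathcal C$''. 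Hence computing $\mathcal S$ amounts to enumerating the minimal $Y \subseteq V(H) \setminus \set{x}$ with $H - Y \in \mathcal C$ and then discarding those with $\size{Y} > k$. The crucial extra fact is that $H - x = G[V \setminus X]$ already belongs to $\mathcal C$, because $X$ is a $\Pi$-set.

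This fact localizes all obstructions at $x$: any target $H - Y \in \mathcal C$ arises from the $\mathcal C$-graph $H - x$ by re-inserting $x$ and deleting whatever conflicts with it. Concretely, each listed class has a simple structural description of its members as a partition into clique and/or independent parts with a prescribed inter-part adjacency: the clique/independent split of split, pseudo-split, threshold and complete-split graphs; the colour classes of complete bipartite and complete $p$-partite graphs; the clique components of cluster graphs; and the single part of complete and edgeless graphs. The strategy is to enumerate the polynomially many \emph{roles} $x$ can take in the target structure --- essentially which part $x$ is placed in, ranging over the (at most $\size{V}$) parts of a structure for $H - x$ together with, where the class allows it, a freshly created part. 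Fixing a role forces every vertex $w \neq x$ whose adjacency to $x$ is incompatible with that role to be deleted; call this forced set $D$. Since $H - x$ already lies in $\mathcal C$ and we delete exactly the vertices conflicting with $x$'s role, $H - D \in \mathcal C$ holds automatically by heredity, and $D$ is the unique minimal solution realizing that role. Ranging over all roles thus gives a polynomial-size candidate family containing every minimal $Y$; a final sweep removes non-minimal candidates and those of size exceeding $k$, all in time polynomial in $\size{V}$.

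To illustrate: for edgeless $\mathcal C$ every edge of $H$ meets $x$, so the only minimal $Y$ avoiding $x$ is $N_H(x)$, and dually for complete $\mathcal C$ it is the set of non-neighbours of $x$; here $\size{\mathcal S} \le 1$. For cluster graphs, $H - x = C_1 \uplus \cdots \uplus C_m$ and every induced $P_3$ of $H$ passes through $x$, so the candidates are $N_H(x)$ (isolating $x$) and $(N_H(x) \setminus C_i) \cup (C_i \setminus N_H(x))$ for each $i$ (attaching $x$ to $C_i$), giving $\size{\mathcal S} = O(m)$. Split, pseudo-split, threshold, complete-split, complete bipartite, and complete $p$-partite graphs are handled by the same recipe, using that each admits only $O(1)$ base partitions and that $x$ has only a bounded number of admissible parts to join, so each contributes polynomially many candidates. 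I expect the main obstacle to be the bounded-degree-$d$ case, which has no finite partition certificate: the only obstructions are vertices of degree exceeding $d$, all lying in $\set{x} \cup N_H(x)$ but of possibly unbounded degree. Here a minimal $Y$ must first cut $x$ down to degree at most $d$, and there are only $\size{V}^{O(d)}$ minimal choices of which $\le d$ neighbours of $x$ survive; once these are fixed, only those $\le d$ surviving neighbours can remain over-loaded, and each is repaired by deleting an $O(1)$-sized subset of its at most $d$ other neighbours in $O(1)$ ways. Thus $\size{\mathcal S} = \size{V}^{O(d)}$, polynomial for fixed $d$ and enumerable within the same bound. Assembling the per-class counts shows that $\Pi$ is a CKS property in every listed case.
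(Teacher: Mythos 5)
The paper itself offers no proof of this proposition: it is imported wholesale from Cohen et al.\ and Cao, so your attempt is effectively competing with the arguments in those references. Your opening reduction is correct and matches the standard setup: since $Y\subseteq V\setminus X$ by minimality, the task becomes enumerating the minimal $Y\subseteq V(H)\setminus\{x\}$ with $H-Y\in\mathcal C$, where $H=G[(V\setminus X)\cup\{x\}]$ and $H-x\in\mathcal C$. Your per-class arguments are also sound for the classes whose defining partition is \emph{canonical} and inherited by induced subgraphs (edgeless, complete, cluster, complete $p$-partite: there the parts are determined by the adjacency or non-adjacency relation itself), and the bounded-degree-$d$ analysis, which you flag as the main obstacle, is essentially fine.

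The genuine gap is the claim that ``ranging over all roles gives a polynomial-size candidate family containing every minimal $Y$'' for split, pseudo-split and threshold graphs. For these classes the clique/independent partition is neither unique nor inherited: after deleting $Y$, the witnessing partition of $H-Y$ need not restrict to the partition of $H-x$ that you fixed, so a minimal $Y$ need not realize any of your roles. Concretely, let $V(H)=\{x,a,b,c,d\}$ with edge set $\{ab,ac,xc,xd\}$. Then $H-x$ is split with $K=\{a,b\}$, $I=\{c,d\}$, while $H$ is not split because $\{a,b\}$ and $\{x,d\}$ induce a $2K_2$. Your two candidates are $D_1=K\setminus N(x)=\{a,b\}$ and $D_2=I\cap N(x)=\{c,d\}$, but the minimal solutions are exactly $\{a\}$, $\{b\}$ and $\{d\}$ (for instance $H-d$ is the path $b,a,c,x$, which is split with clique $\{a,c\}$ and independent set $\{b,x\}$); none of these is generated, and neither $D_1$ nor $D_2$ is even minimal, so the forced-deletion-per-role scheme outputs a wrong family. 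Fixing this requires a genuinely class-specific analysis --- e.g.\ exploiting that the split partition of a split graph is unique up to moving a single vertex, and enumerating how the partition of $H-Y$ can differ from that of $H-x$ --- which is what the cited proofs of Cohen et al.\ and Cao actually do.
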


{Note that if a graph belongs to a graph class $\mathcal C$ characterized by a set of forbidden subgraphs with at most constant $d$ vertices, then the problem of finding a vertex set whose removal results a graph contained in $\mathcal C$ to that of finding a hitting sets of size at most $k$ in a hypergraph of rank~$d$ (see \Cref{ssec:dHS} for more information).
Hence, }the problem of computing a minimum vertex set whose removal leaves a subgraph in a graph class listed in \Cref{prop:cks} can be approximated with a constant factor in polynomial time. 
Therefore, our approximate enumeration framework can be applied to those properties $\Pi$ and enumerate minimal $(\Pi, w, k)$-sets with constant approximation factors in polynomial delay.
Moreover, this framework can be used for more general weight functions, including monotone submodular functions.
Let us note that our frameworks can be used for more general weight functions.
A function $f: 2^U \to \mathbb Q_{> 0}$ is \emph{monotone} if $f(X) \le f(Y)$ for $X \subseteq Y \subseteq U$ and is \emph{subadditive} if $f(X) + f(Y) \ge f(X \cup Y)$ for $X, Y \subseteq U$.
One of the famous examples of subadditive functions is a submodular function. 
A function is \emph{submodular} if $f(X) + f(Y) \ge f(X \cup Y) + f(X \cap Y)$ for $X, Y \subseteq U$.
To bound the approximation factor of our enumeration algorithms for a monotone subadditive function $f$, we can replace function $w$ with $f$ in inequality~(\ref{eq:weight-bound}):
\begin{linenomath}
    \begin{align*}
        f(Z) = f(\comp{(X \setminus \{x\})} \cup R) \le f(X \cup R) \le f(X) + f(R) \le (c + d + 1)k,
    \end{align*}
\end{linenomath}
where the first and second inequalities are obtained from the monotonicity and subadditivity of $f$, respectively.
This gives a polynomial-delay $3$-approximate enumeration for minimal vertex covers with monotone submodular cost by combining \Cref{theo:apx:poly} with a polynomial-time $2$-approximation algorithm for the minimum vertex cover problem with submodular cost function due to~\cite{DBLP:conf/focs/IwataN09}.
% \end{toappendix}

\begin{toappendix}
\section{Applications of our framework}\label{appendix:application}
In this section, we give some concrete examples for efficient approximation enumeration algorithms. 

\subsection{Approximate Enumeration for Vertex Deletion Problems with Width Parameters}

In this subsection, we particularly work on \emph{vertex deletion properties on graphs}.
Let $G = (V, E)$ be a graph.
We say that a monotone property $\Pi$ over $V$ is a \emph{vertex deletion property} if for every minimal $\Pi$-set $X$ of $V$ and $R \subseteq X$, $X \setminus R$ is a minimal $\Pi$-set of $G[V \setminus R]$.
Many properties can be formulated as vertex deletion properties.
In particular, for a (possibly infinite) family of graphs $\mathcal F$, the property of being a vertex set whose removal breaks all induced subgraphs isomorphic to any graph in $\mathcal F$ is a vertex deletion property.

To solve hard problems on graphs, \emph{width parameters}, such as \emph{treewidth} and \emph{cliquewidth}, have become ubiquitous tools, and Courcelle's theorem~\cite{Courcelle:monadic:1990} is a cornerstone result to develop efficient algorithms on graphs when these parameters are small.
If the treewidth (resp. cliquewidth) of a graph is {upper-bounded} by a constant, the problem of finding a minimum weight vertex or edge set satisfying a property expressible by a formula in MSO$_2$ (resp. MSO$_1$) can be solved in linear time~\cite{Arborg:easy:1991,Courcelle:monadic:1990}.
There are numerous properties that are expressible in MSO$_1$ and MSO$_2$.
We refer to \cite{ParameterizedAlgorithms,Kreutzer:Algorithmic:2011} for the detailed definitions of these width parameters and a brief introduction to Courcelle's theorem.

For our purpose, we use the following lemmas.

\begin{lemmarep}\label{lem:tw}
    Let $G = (V, E)$ be a graph with constant treewidth and let $\Pi$ be a monotone property on $V$.
    Suppose that $\Pi$ can be expressed by a formula in MSO$_2$.
    Then, for $x \in V$, one can enumerate all the minimal $(\Pi, w, k)$-sets excluding $x$ in polynomial delay.
\end{lemmarep}
\begin{proof}
    Let $X \subseteq V$.
    Let $\phi(X)$ be a predicate that is true if and only if $X$ is a $\Pi$-set, which is expressed by a formula in MSO$_2$.
    As $\Pi$ is monotone, the minimality of $X$ can be expressed as:
    \begin{linenomath}
    \[
        \phi_m(X) := \phi(X) \land \forall v\in V.(v \in X \implies \neg\phi(X \setminus \{v\})).
    \]
    \end{linenomath}    
    Let $I \subseteq V$ and let $O \subseteq V \setminus I$.
    Then, the property that there is a minimal $\Pi$-set $X$ excluding $x$ such that $I \subseteq X$ and $O \cap X = \emptyset$ is expressible with the following formula:
    \begin{linenomath}
    {
        \begin{align*}
            \phi_m(x, I, O) := 
            & \exists X \subseteq V. (\phi_m(X) \land \forall v \in V.((v \in I \implies v \in X) \land \\
            & (v \in (O \cup \{x\}) \implies v \not\in X))).
        \end{align*}
    }
    % \[
    %     \phi_m(x, I, O) := \exists X \subseteq V. (\phi_m(X) \land \forall v \in V.((v \in I \implies v \in X) \land (v \in (O \cup \{x\}) \implies v \not\in X))).
    % \]
    \end{linenomath}        
    Now, we can enumerate all the minimal $(\Pi, w, k)$-sets of excluding $x$ in polynomial delay by the $K$-best enumeration technique~\cite{Lawler1972}.
    \begin{itemize}
        \item We first compute a minimum weight $(\Pi, w, k)$-set $S$ excluding $x$ by the optimization version of Courcelle's theorem with MSO${}_2$ formula $\phi_m(x, \emptyset, \emptyset)$.
        If $w(S) > k$, we do nothing. Otherwise, we add a tuple $(S, \emptyset, \emptyset)$ to a queue.
        \item Remove a tuple $(S', I, O)$ from the queue that minimizes $w(S')$ among those in the queue.
        Output $S'$ and do the following. We repeat this and the subsequent step unless the queue is empty.
        \item Let $v_1, v_2, \ldots, v_t$ be the vertices in {$S' \setminus I$}.
        For each {$1 \le i < t$}, we compute a minimal $(\Pi, w, k)$-set {$S_i$} including {$I \cup \{v_1, \ldots, v_i\}$} and excluding {$O \cup \{v_{i+1}, x\}$} with MSO${}_2$ formula {$\phi_m(x, I \cup \{v_1, \ldots, v_i\}, O \cup \{v_{i+1}\})$} that minimizes {$w(S_i)$}.
        If {$w(S_i) \le k$}, add {$(S_i, I \cup \set{v_1, \ldots, v_i}, O \cup \set{v_{i+1}})$} to the queue.
    \end{itemize}
    {
        From the above procedure, all outputs are minimal $(\Pi, w, k)$-set excluding~$x$ with the weight at most $k$.
        We denote by $\mathcal O$ the set of solutions that are output by the algorithm.
        We show that $\mathcal O$ contains all minimal $(\Pi, w, k)$-sets excluding~$x$ with the weight at most $k$. 
        Suppose otherwise.
        Let $T \notin \mathcal O$ be a minimal $(\Pi, w, k)$-set excluding $x$ with the weight at most $k$.
        For each solution $S \in \mathcal O$, we denote by $(S, I_S, O_S)$ the tuple that is removed from the queue when the algorithm outputs $S$.
        Let $\tilde S \in \mathcal O$ that satisfies $T \cap (O_{\tilde S} \cup \set{x}) = \emptyset$ and $I_{\tilde S} \subseteq T$, and maximizes $\size{T \cap I_{\tilde S}}$.
        We can choose such a set as $(S, \emptyset, \emptyset)$ is added to the queue.
        Let $v_1, \dots, v_t$ be the vertices in $\tilde S \setminus I_{\tilde S}$.
        As $\tilde S \neq T$ and $I_{\tilde S} \subseteq T$,
        there is a vertex $v_{i+1} \in \tilde S \setminus I_{\tilde S}$ such that $v_{i+1} \not\in T$ and $\set{v_1, \ldots, v_i} \subseteq T$.
        Let $\tilde S_i$ be a minimal $(\Pi, w, k)$-set with the minimum weight that satisfies $I_{\tilde S} \cup \set{v_1, \ldots, v_i} \subseteq \tilde S_i$ and $(O \cup \set{v_{i+1}, x}) \cap \tilde S_i = \emptyset$.
        Since $T$ satisfies the constraints, the weight of $\tilde S_i$ is at most the weight of $T$.
        Hence, $\tilde S_i$ is contained in $\mathcal O$ and
        it contradicts the fact that $\tilde S$ is chosen to maximize $\size{T \cap I_{\tilde S}}$ subject to
        $T \cap (O_{\tilde S}\cup\set{x}) = \emptyset$ \revise{and} $I_{\tilde S} \subseteq T$.
        This ensures that the algorithm enumerates all minimal $(\Pi, w, k)$-set excluding $x$ with the weight at most $k$.
        Finally, each step can be done in polynomial time and hence the delay is upper-bounded by a polynomial in $\size{V}$.
    }
    % An essential difference from \cite{Lawler1972} is that, in the $K$-best enumeration, the algorithm terminates immediately after generating $K$ solutions.
    % Thus, the correctness of this algorithm directly follows from \cite{Lawler1972}.
\end{proof}

We also give an enumeration algorithm on bounded cliquewidth graphs.
{
    The proof is almost the same as the previous one.
    Similar to the previous proof, we can find a minimal $(\Pi, w, k)$-set expressed by a formula in MSO${}_1$ using the above technique.
    Thus, we omit the proof of the following lemma.
}
\begin{lemma}\label{lem:cw}
    Let $G = (V, E)$ be a graph with constant cliquewidth and let $\Pi$ be a monotone property over $V$.
    Suppose that $\Pi$ can be expressed by a formula in MSO$_1$.
    Then, for $x \in V$, one can enumerate all the minimal $(\Pi, w, k)$-sets excluding $x$ in polynomial delay.
\end{lemma}

Now, we show the main result of this subsection.

\begin{theoremrep}\label{thm:enum-width}
    Let $G = (V, E)$ be a graph and let $\Pi$ be a monotone vertex deletion property over $V$ that is expressible by a formula in MSO$_2$ (resp. MSO$_1$).
    Suppose that for every $\Pi$-set $X$, the graph obtained from $G$ by deleting $X$ has constant treewidth (resp. constant cliquewidth).
    Given a $\Pi$-set $S$ of weight at most $ck$ for some $c > 0$, one can enumerate all the minimal $(\Pi, w, k)$-sets with approximation factor $c + 2$ in incremental polynomial time.
\end{theoremrep}
\begin{proof}
    Let $X \subseteq V$ be a minimal $\Pi$-set.
    Suppose that for every $\Pi$-set $X$, the treewidth (resp. cliquewidth) of $G[V \setminus X]$ is at most a constant.
    Since adding a vertex increases its treewidth by at most one (resp. cliquewidth to at most twice plus one \cite{Lozin:Band:2004}), $G[V \setminus (X \setminus \{x\})]$ also has a constant treewidth (resp. cliquewidth).
    For $x \in X$ and $R \subseteq V \setminus (X \setminus \{x\})$, let $\phi(R, x)$ be the predicate that is true if and only if $R$ is a minimal $\Pi$-set of $G[V \setminus (X \setminus \{x\})]$ excluding $x$, which is expressed by a formula in MSO$_2$ (resp. MSO$_1$).
    By Lemmas~\ref{lem:tw} and~\ref{lem:cw}, there is an algorithm $\mathcal A$ that compute $\mathcal S_{\mathcal A}(V, \Pi, w, k, X, x)$ exactly in polynomial delay, and hence by Theorem~\ref{theo:apx:output}, we can enumerate all the minimal $(\Pi, w, k)$-sets in incremental polynomial time with approximation factor $c + 2$.
\end{proof}

Let $\mathcal F$ be a finite set of graphs that contains at least one planar graph.
It is known that every graph that does not contain a fixed planar graph $H$ as a minor has a constant treewidth~\cite{Robertson:GM5:1986}.
The property $\Pi_{\mathcal F}$ of being a vertex set that hits all the minors of graphs in $\mathcal F$ can be expressed in a MSO$_2$ formula.
Moreover, there is an $O(1)$-approximate algorithm for finding a smallest {\em cardinality} $\Pi_{\mathcal F}$-set \cite{Fomin:Planar:2012} if $\mathcal F$ contains at least one planar graph.
Note that for weighted graphs, the best known approximation factor for this problem is polylogarithmic~\cite{Agrawal:Polylogarithmic:2018}.
These facts yield the following corollary.

\begin{corollary}\label{cor:enum-treewidth}
    Let $G = (V, E)$ be a graph.
    Let $\mathcal F$ be a set of graphs that contains at least one planar graph and
    let $\Pi_{\mathcal F}$ be the vertex deletion property corresponding to every set $S$ whose removal from $G$ leaves a graph {that} does not contain any graph in $\mathcal F$ as a minor.
    Then, one can enumerate all the minimal $(\Pi, k)$-sets with a constant approximation factor in incremental polynomial time.
\end{corollary}

By Corollary~\ref{cor:enum-treewidth}, we have incremental polynomial-time approximate enumeration algorithms for minimal feedback vertex sets, minimal caterpillar forest vertex deletion sets, minimal star forest vertex deletion sets, minimal outerplanar vertex deletion sets, and minimal treewidth-$\eta$ vertex deletion sets for fixed integer $\eta$.
    We note that the approximation algorithm due to~\cite{Fomin:Planar:2012} is randomized and the expected approximation factor is {upper-bounded} by a constant.
Also note that, for specific problems, such as the minimum feedback vertex set problem, the minimum caterpillar forest vertex set problem (also known as the pathwidth-one vertex set deletion problem), and the minimum star forest vertex deletion set problem, they admit deterministic constant factor approximation algorithms~\cite{Becker:Optimization:1996,Philip:Quartic:2010,demaine_et_al:LIPIcs:2019:11158} In \cite{Philip:Quartic:2010}, they did not explicitly mention the approximability of the minimum caterpillar forest vertex deletion set problem, while we can easily obtain a deterministic $7$-approximation algorithm from their argument. {
A graph is a forest of caterpillars if and only if it has no cycles or $2$-claws as a subgraph~(Lemma~1 in~\cite{Philip:Quartic:2010}), where a graph is a $2$-claw if it is obtained from a claw $K_{1, 3}$ by attaching a pendant vertex to each leaf.
Moreover, if a graph has no cycles of length~$3$, a cycle of length~$4$, or a $2$-claw as a subgraph, each connected component is a tree, or a cycle with ``hairs'' (Lemma~3 in \cite{Philip:Quartic:2010}).
Using these facts, we have a polynomial-time approximation algorithm. We repeatedly find a subgraph isomorphic to a cycle of length at most $4$ or a $2$-claw and then delete all the vertices in the subgraph.
Then, we compute a minimum caterpillar forest vertex deletion set in the remaining graph.
Since each remaining component is either a tree or a cycle with hairs, we can solve it in polynomial time.
Moreover, the approximation ratio of the above algorithm is at most $7$, as each subgraph deleted in the first step contains at most seven vertices.
}

% From Lemma~3, if a graph $G$ has no all cycles with length three or four and a graph obtained by subdividing all edges in claw, each connected component of $G$ is a tree or a cycle with hairs. For such a graph, there is a polynomial time algorithm for the minimum caterpillar vertex deletion problem.}

As for dense graph classes, some graph classes have constant cliquewidth: Block graphs, cographs, bipartite chain graphs, and trivially perfect graphs are of bounded cliquewidth.
Since the classes of cographs, bipartite chain graphs, and trivially perfect graphs can be respectively characterized by finite sets of forbidden induced subgraphs,
there are $O(1)$-approximation algorithms for finding a minimum weight set $S \subseteq V$ such that $G[V \setminus S]$ is a cograph, a bipartite chain graph, or a trivially perfect graph.
Although no characterization by a finite set of forbidden induced subgraphs for block graphs is known, {there is an approximation algorithm for the minimum block graph vertex deletion problem on unweighted graphs with the approximation factor $4$. See Theorem~6 in \cite{Agrawal:faster:2016}}. Moreover, the property of being a block graph is expressible in MSO$_1$ by using the fact that a graph is a block graph if and only if every biconnected component induces a clique.
Therefore, as another consequence of Theorem~\ref{thm:enum-width}, we have the following corollary.
\begin{corollary}\label{cor:enum-cliquewidth}
    Let $G = (V, E)$ be a graph.
    Let $\mathcal C$ be one of the following classes of graphs; cographs, bipartite chain graphs, and trivially perfect graphs.
    Let $\Pi$ be the property of being a set of vertices whose removal leaves a graph in $\mathcal C$.
    Then, one can enumerate all the minimal $(\Pi, w, k)$-sets with constant approximation factors in incremental polynomial time.
    For block graphs, one can enumerate all the minimal $(\Pi, k)$-sets with approximation factor $6$ in incremental polynomial time {since there is an approximation algorithm for the minimum block vertex deletion problem with the approximation factor $4$~\cite{Agrawal:faster:2016}}.
\end{corollary}

Since the property of being a set of vertices whose removal leaves a distance hereditary graph is monotone and every distance hereditary graph has cliquewidth at most five, we can apply our framework to this property.
However, there is no known polynomial-time constant factor approximation algorithm for finding a minimum vertex set of this property.
The current best approximation factor is polylogarithmic in the input size due to Agrawal et al.~\cite{Agrawal:Polylogarithmic:2018}.

\subsection{\texorpdfstring{$d$}{d}-Hitting Set}\label{ssec:dHS}
Let $\mathcal H = (V, \mathcal E)$ be a hypergraph. A \emph{hitting set} of $\mathcal H$ is a subset $S$ of $V$ such that $S \cap e \neq \emptyset$ for every $e \in \mathcal E$.
The existence of an output-polynomial time enumeration algorithm for minimal hitting sets is a long-standing open problem in this field and the best known algorithm runs in quasi-polynomial time in the size of input and output~\cite{Fredman:complexity:1996}.
There are several studies for developing efficient enumeration algorithms for special hypergraphs.
In particular, if every hyperedge contains at most $d$ vertices for some fixed constant $d \ge 2$, 
there is an incremental polynomial time enumeration algorithm~\cite{DBLP:conf/latin/BorosEGK04}.
We say that such a hypergraph has \emph{rank} at most $d$ and call a hitting set of rank-$d$ hypergraphs a \emph{$d$-hitting set}.

We show that minimal $d$-hitting sets of hypergraphs with weight at most~$k$ can be enumerated in incremental-polynomial time with approximation factor $\frac{(d + 4)(d - 1)}{2}$.
It is known that the problem of computing a minimum weight $d$-hitting set of $\mathcal H$ has a polynomial-time $d$-approximation algorithm. 
We use this algorithm to compute a seed set.

\begin{theoremrep}\label{theo:dhit}
    Let $\mathcal H = (V, \mathcal E)$ be a hypergraph with rank at most $d$.
    Then, there is an incremental-polynomial $\frac{(d + 4)(d - 1)}{2}$-approximate enumeration algorithm for enumerating minimal $d$-hitting sets of $\mathcal H$.
\end{theoremrep}
\begin{proof}
    Let $k$ be positive. We show, by induction on $d$, that there is an approximate enumeration algorithm for minimal $d$-hitting sets of $\mathcal H$ of weight at most $k$.
    If $d = 2$, then the problem is equivalent to the vertex cover case, and hence, by Proposition~\ref{prop:cks}, we have a $3$-approximate polynomial delay enumeration algorithm.
    
   Now, we assume that we can enumerate in incremental-polynomial time all the minimal $(d-1)$-hitting sets of weight at most $k$ with the approximation factor $\frac{(d + 3)(d - 2)}{2}$ for every hypergraph having rank at most $d - 1$.
   We consider an enumeration algorithm for minimal $d$-hitting sets of $\mathcal H$.
   Let $\Pi_{\tt HS}$ be the property of being a hitting set of a hypergraph $\mathcal H$. 
   Observe that, for a minimal hitting set $X \subseteq V$ of $\mathcal H$ and $x \in X$, {\sc Input-Restricted Minimal $(\Pi_{\tt HS}, w, k)$-Sets Problem} for $\mathcal H$ can be seen as the problem of enumerating minimal $(d-1)$-hitting sets in a hypergraph $\mathcal H'$ obtained from $\mathcal H$ by deleting all the hyperedges intersecting with $X \setminus \set{x}$ and removing $x$ from each remaining hyperedge.
   By the induction hypothesis, we can enumerate minimal $(d-1)$-hitting sets of $\mathcal H'$ with approximation factor $\frac{(d + 3)(d-2)}{2}$ in incremental-polynomial time.
   Since there is a polynomial-time $d$-approximation algorithm for the minimum $d$-hitting set problem, by \Cref{theo:apx:output}, we can enumerate all the minimal $d$-hitting sets of $\mathcal H$ in incremental-polynomial time with approximation factor $d + \frac{(d+3)(d-2)}{2} + 1 = \frac{(d+4)(d-1)}{2}$.
\end{proof}

\subsection{Star Forest Edge Deletion}\label{subsec:sfed}

\newcommand{\SFED}{{\tt SFED}}
\newcommand{\psfvd}{\Pi_{\tt SFVD}}

A graph is called a \emph{star forest} if each component is a star graph.
A \emph{star graph} is a graph isomorphic to $K_{1, t}$, a complete bipartite graph of $t + 1$ vertices
{with partition $A$ and $B$ such that $\size{A} = 1$ and $\size{B} = t$.}
The \emph{center} of a star is a vertex of degree greater than one. 
Note that we regard an isolated vertex or a complete graph of two vertices as a star. 
A vertex $v$ is called a \emph{leaf} if degree of $v$ is equal to one.
Let $G = (V, E)$ be a graph.
A vertex set $S \subseteq V$ (resp. an edge set $F \subseteq E$) is called a \emph{star forest vertex deletion set} (resp. \emph{star forest edge deletion set}) of $G$ if $G[V \setminus S]$ (resp. $G - F$) is a star forest.
Let $\psfvd$ and $\Pi_\SFED$ be the properties of being a star forest vertex and edge deletion set of $G$, respectively.
It is easy to see that both properties are monotone over $V$ and $E$, respectively.

\begin{lemmarep}\label{lem:cks-sfed}
    $\Pi_\SFED$ is a CKS property.
\end{lemmarep}

\begin{proof}
    Let $G = (V, E)$ be a graph, $X \subseteq E$ be a minimal star forest edge deletion set of $G$, and $e = \{u, v\} \in X$. 
    Let $\mathcal S$ be a collection of a set of edges $R$ such that $(X \setminus \set{e}) \cup R$ is star forest edge deletion. 
    Since $X$ is a minimal star forest deletion, $G - (X\setminus\set{e})$ contains exactly one component $C$ that is not a star. 
    Hence, $R$ only contains edges in $C$. 

    We consider two cases:
    (1)~exactly one of $d_C(u)$ or $d_C(v)$ is more than one, or
    (2)~both $d_C(u)$ and $d_C(v)$ are more than one.    
    In Case~(1), we may assume $d_C(u) > 1$.
    Since $C$ is not a star and $C - \set{e}$ is a star, 
    $C$ contains~$w$ with $d_C(w) > 1$ that is adjacent to $u$ in $C$. 
    This implies that $u$ is a leaf in $G - X$. 
    There are two minimal star forest edge deletion sets $R$ of $C$ without containing $e$: $\set{\set{u, w}}$ and $\Gamma_C(w)\setminus \set{\set{u, w}}$.
    Suppose that both $d_C(u)$ and $d_C(v)$ are more than one, namely Case~(2).
    As $R$ does not contain $e$, at least one of $u$ and $v$, say $u$, is a leaf in $C - R$.
    Thus, $R$ contains all edges but $e$ incident to $u$.
    If $v$ is also a leaf in $C - R$ (i.e., $e$ is a component in $C - R$), then $R$ contains all edges but $e$ incident to $v$ as above.
    Otherwise, $v$ is the center vertex of degree at least two in the star component $C'$ of $C - R$.
    Then, every neighbor $w$ of $v$ in $C'$ must be a leaf.
    Therefore, $R$ contains all edges incident to $w \in N_C(v)$ except one edge $\{v, w\}$.
    By guessing whether $u$ and~$v$ are leaves or centers in $C - R$, we can uniquely determine minimal star forest edge deletion set $R$ of $C$.
    
    In all cases, there are only a constant number of possibilities of minimal star forest edge deletion set $R$ of $C$.
    Hence, $\mathcal S$ contains a constant number of sets, which can be computed in polynomial time.
\end{proof}

Since there is a polynomial-time $3$-approximation algorithm for finding a minimum star forest edge deletion set~\cite{demaine_et_al:LIPIcs:2019:11158}, by Theorem~\ref{theo:apx:poly}, we have the following theorem.

\begin{theorem}\label{thm:approx-enum-sfd}
    There is an approximate enumeration algorithm for minimal star forest edge deletion sets with approximation factor $4$ that runs in polynomial delay.
\end{theorem}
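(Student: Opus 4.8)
The plan is to obtain this as a direct instantiation of the general framework in Theorem~\ref{theo:apx:poly}, for which two ingredients are required: a polynomial-time membership test for $\Pi_\SFED$, and a polynomial-time procedure computing $\mathcal S$. The first is immediate, since one can verify in linear time whether $G - F$ is a star forest (each connected component has at most one vertex of degree greater than one). The second is exactly the content of Lemma~\ref{lem:cks-sfed}, which establishes that $\Pi_\SFED$ is a CKS property, so $\mathcal S(E, \Pi_\SFED, k, X, e)$ has constant size and is computable in polynomial time. Thus both hypotheses of Theorem~\ref{theo:apx:poly} hold.

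First I would produce a suitable seed. Running the polynomial-time $3$-approximation algorithm of~\cite{demaine_et_al:LIPIcs:2019:11158} yields a star forest edge deletion set $F$ with $|F| \le 3 \cdot \mathrm{OPT}$, where $\mathrm{OPT}$ is the minimum cardinality of such a set. I would then inspect $|F|$: if $|F| > 3k$, then $3\,\mathrm{OPT} \ge |F| > 3k$ forces $\mathrm{OPT} > k$, so $G$ has no star forest edge deletion set of cardinality at most $k$ and there is no minimal $(\Pi_\SFED, k)$-set to output; the algorithm halts having emitted nothing, which vacuously solves the relaxed problem. Otherwise $|F| \le 3k$, and applying $\comp{F}$ (greedy minimalization) returns a minimal star forest edge deletion set $F' \subseteq F$ with $|F'| \le |F| \le 3k$.

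With $F'$ in hand as a minimal $\Pi_\SFED$-set of cardinality at most $3k$, I would invoke Theorem~\ref{theo:apx:poly} with $c = 3$. This enumerates, in polynomial delay, a collection of minimal $\Pi_\SFED$-sets containing every minimal $(\Pi_\SFED, k)$-set, each output set having cardinality at most $(c+1)k = 4k$; hence the approximation factor is $4$, as claimed.

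I expect no genuine obstacle in this step: the substantive technical work—proving that the input-restricted problem admits only constantly many solutions, which is what makes $\Pi_\SFED$ a CKS property—is already discharged in Lemma~\ref{lem:cks-sfed}. The only points requiring care are the cardinality bookkeeping for the seed (ensuring $|F'| \le 3k$ through the $3$-approximation guarantee) and the edge case $\mathrm{OPT} > k$, where emitting nothing is correct precisely because the relaxed problem demands only that all minimal $(\Pi_\SFED, k)$-sets appear among the outputs.
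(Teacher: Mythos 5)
Your proposal is correct and follows exactly the paper's route: the paper proves Theorem~\ref{thm:approx-enum-sfd} by combining Lemma~\ref{lem:cks-sfed} (the CKS property of $\Pi_\SFED$) with Theorem~\ref{theo:apx:poly} and the $3$-approximation of~\cite{demaine_et_al:LIPIcs:2019:11158} as the seed, giving factor $c+1=4$. Your additional bookkeeping (minimalizing the seed via $\comp{\cdot}$ and handling the case $|F|>3k$ by outputting nothing) is the same routine the paper spells out for the vertex cover instance in Section~\ref{ssec:cks} and leaves implicit here.
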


Contrary to this, we cannot directly apply our framework to the vertex counterpart.
We should remark that this does not imply there is no polynomial-delay enumeration algorithm with a constant approximation factor. 

\begin{propositionrep}\label{prop:sfvd-ncks}
    $\psfvd$ is not a CKS property.
\end{propositionrep}
\begin{proof}
    From a star graph $K_{1, 2n}$ with leaves $\{v_1, v_2, \ldots, v_{2n}\}$ for some integer~$n$ and the center $r$, we construct a graph $G = (V, E)$ by adding an edge between~$v_{2i - 1}$ and $v_{2i}$ for each $1 \le i \le n$. 
    Let $X = \{r\}$.
    Then, $X$ is a minimal star forest vertex deletion set of $G$.
    It is easy to verify that there are exponentially many minimal star forest vertex deletion sets $Y$ of $G[V]$ that excludes $r$ since for each $1 \le i \le n$, $Y$ contains either $v_{2i - 1}$ or $v_{2i}$, which implies that {there are} $2^n$ possible combinations {for} $Y$. 
\end{proof}

\subsection{Dominating Set in Bounded Degree Graphs}

Let $G = (V, E)$ be a graph.
We say that $D \subseteq V$ is a \emph{dominating set} of~$G$ if every vertex in $V$ is either contained in $D$ or adjacent to a vertex in $D$.
Let $\Pi_{\tt Dom}$ be the property of being a dominating set of $G$.
Although it is not hard to see that $\Pi_{\tt Dom}$ is not a CKS property,
we can prove that it does have a CKS property when the maximum degree of $G$ is {upper-bounded} by a constant.

\begin{lemmarep}\label{lem:cks-dom}
    $\Pi_{\tt Dom}$ is a CKS property, provided every vertex of $G$ has degree at most some fixed constant $\Delta$.
\end{lemmarep}

\begin{proof}
    Let $X$ be a minimal dominating set of $G$.
    Since the degree of every vertex of $G$ is at most $\Delta$, there are a constant number of non-dominated vertices $U \subseteq N(x)$ by $X \setminus \set{x}$.
    Moreover, since such a vertex in $U$ can be dominated by its neighbor, there are a constant number of possibilities of minimal vertex addition. 
    Therefore, $\mathcal S$ can be computed in polynomial time.
\end{proof}

It is well known that the minimum dominating set problem admits a polynomial-time $O(\log \Delta)$-approximation algorithm on graphs of maximum degree $\Delta$~\cite{Chvatal:Greedy:1979}. This yields the following approximate enumeration.
\begin{theorem}\label{thm:approx-enum-dom}
    There is an approximate enumeration algorithm for minimal dominating sets on bounded degree graphs with approximation factor $O(\log \Delta)$ runs in polynomial delay, where $\Delta$ is the maximum degree of input graphs.
\end{theorem}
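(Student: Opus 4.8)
The plan is to instantiate the generic framework of Theorem~\ref{theo:apx:poly} with $\Pi = \Pi_{\tt Dom}$, exactly as was done for {\sc Vertex Cover} in Section~\ref{ssec:cks}. The two preconditions of that theorem are that membership of $\Pi_{\tt Dom}$ is polynomial-time checkable and that $\mathcal S(V, \Pi_{\tt Dom}, k, X, x)$ is polynomial-time computable for every minimal $\Pi_{\tt Dom}$-set $X$ and $x \in X$. The former is immediate, since testing whether a vertex set dominates $G$ takes $O(|V| + |E|)$ time; the latter is precisely Lemma~\ref{lem:cks-dom}, which holds because $G$ has maximum degree at most the fixed constant $\Delta$. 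Thus the only remaining work is to supply a suitable ``seed'' minimal $\Pi_{\tt Dom}$-set of cardinality at most $ck$ and to track the resulting approximation factor.

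First I would run the $O(\log \Delta)$-approximation algorithm of~\cite{Chvatal:Greedy:1979}, which on a graph of maximum degree $\Delta$ produces a dominating set $S$ with $|S| \le c \cdot \mathrm{OPT}$ for some $c = O(\log \Delta)$, where $\mathrm{OPT}$ denotes the size of a minimum dominating set of $G$. If $|S| > ck$, then $ck < |S| \le c \cdot \mathrm{OPT}$ forces $\mathrm{OPT} > k$, so $G$ has no dominating set of cardinality at most $k$; in this case the collection of minimal $(\Pi_{\tt Dom}, k)$-sets is empty and the algorithm correctly outputs nothing. Otherwise $|S| \le ck$, and I would compute $\comp{S}$, a minimal dominating set contained in $S$, whose cardinality is therefore also at most $ck$. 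This yields the required minimal $(\Pi_{\tt Dom}, ck)$-set to hand to Theorem~\ref{theo:apx:poly}.

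Feeding $\comp{S}$ to the algorithm of Theorem~\ref{theo:apx:poly} then enumerates, in polynomial delay, a collection of minimal dominating sets that contains every minimal $(\Pi_{\tt Dom}, k)$-set, with approximation factor $c + 1 = O(\log \Delta)$, where the additive constant is absorbed into the $O(\cdot)$. Since each computation of $\mathcal S$ is polynomial by Lemma~\ref{lem:cks-dom} and each membership test is polynomial, the polynomial-delay bound follows directly from Theorem~\ref{theo:apx:poly}.

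I do not expect a genuinely hard step here: the combinatorial content lives entirely in Lemma~\ref{lem:cks-dom} establishing the CKS property, and Theorem~\ref{theo:apx:poly} packages the supergraph traversal and its analysis. The only point needing a little care is the seed-set bookkeeping of the second paragraph, namely using the approximation guarantee both to detect the no-solution case ($|S| > ck$) and to ensure that the set handed to the framework is a \emph{minimal} $\Pi_{\tt Dom}$-set of size at most $ck$, so that the reported factor is $c + 1$ rather than something larger.
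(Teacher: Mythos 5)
Your proposal is correct and matches the paper's own argument: the paper likewise combines Lemma~\ref{lem:cks-dom} (the CKS property for domination in bounded-degree graphs) with the $O(\log \Delta)$-approximation of~\cite{Chvatal:Greedy:1979} as the seed and invokes Theorem~\ref{theo:apx:poly}. Your extra care about the no-solution case and minimalizing the seed set is exactly the bookkeeping the paper spells out for the vertex-cover instance in Section~\ref{ssec:cks} and leaves implicit here.
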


\end{toappendix}

\section{Approximate enumeration beyond our frameworks}\label{sec:eds}
In this section, as another algorithmic contribution, we propose polynomial-delay constant factor approximate enumeration algorithms for minimal edge dominating sets and minimal Steiner subgraphs, for which \revise{it} seems to be difficult to apply the previous frameworks directly. 

\newcommand{\rom}[1]{\uppercase\expandafter{\romannumeral #1\relax}}

 \newcommand{\II}{\rom{2}\xspace}
 \newcommand{\III}{\rom{3}\xspace}
 \newcommand{\IV}{\rom{4}\xspace}

\subsection{Minimal edge dominating sets}
Let $G = (V, E)$ be a graph. Let $n = |V|$ and $m = |E|$. 
A set $D \subseteq E$ of edges is an \emph{edge dominating set} of $G$ if every edge $e$ in $E$ either belongs to $D$ {or has a common end vertex with some $f$ in $D$}.
%or has an edge $f \in D$ that shares a common end vertex.
Let $w\colon E \to \mathbb Q_{> 0}$ be a weight function. 
In the latter case, we say that~$e$ is \emph{dominated by $D$} or $D$ \emph{dominates~$e$}.
Let $X \subseteq E$ be an edge dominating set of $G$.
When an edge~$e \in E$ is dominated by $X$, but is not dominated by $X \setminus \set{x}$ for some $x \in X$, $e$ is called the \emph{private edge of $x$}. 
Moreover, we say that $x$ has the private edge $e$. 
Note that $x$ may have itself as the private edge.
It is known that $X$ is a minimal edge dominating set of $G$ if and only if any edge in $X$ has at least one private edge. 
Let $\Pi_{\tt EDS}$ be the property of being an edge dominating set of a graph $G = (V, E)$.  
{
    An important observation is that a minimal edge dominating set is a star forest.
    This observation is a key to our approximate enumeration algorithm.
}

Kant\'e \etal developed a polynomial-delay and polynomial-space enumeration algorithm for minimal edge dominating sets~\cite{DBLP:conf/wads/KanteLMNU15}. 
Their algorithm is based on the reverse search technique~\cite{Avis:Fukuda:DAM:1996}.
It would be highly nontrivial to extend their algorithm to our problem setting.
Moreover, unfortunately, $\Pi_{\tt EDS}$ is {not a CKS property:} 
The input-restricted problem of the instance in \Cref{fig:EDS-nonCKS} has the exponential number of solutions for $n$.
\begin{figure}[h]
    \centering
    \includegraphics[width=0.8\textwidth]{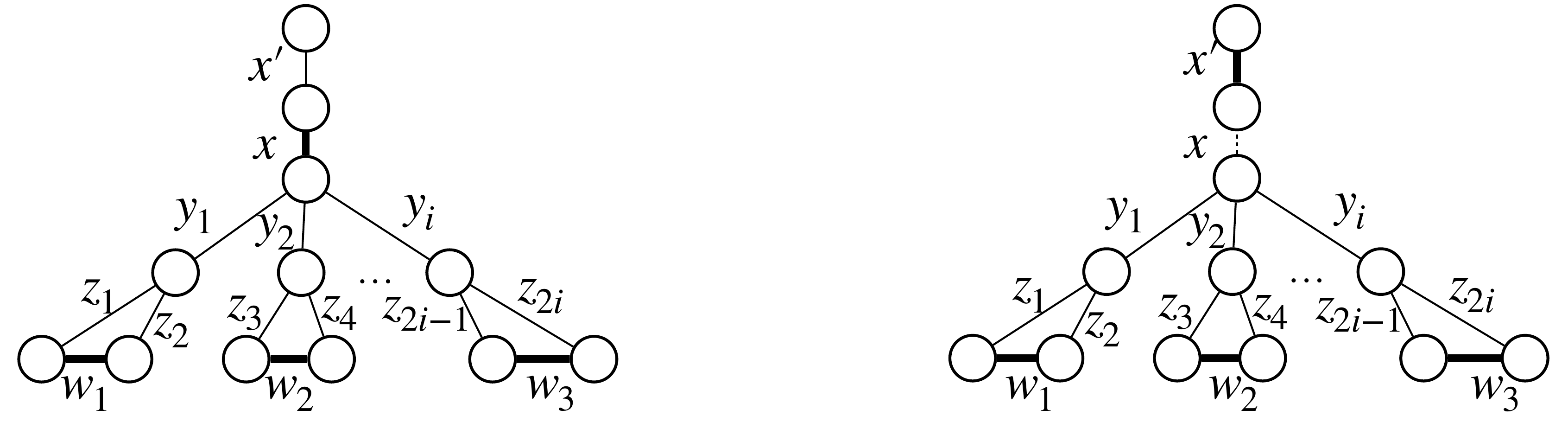}
    \caption{Example of an intractable case of the input-restricted problem.
    Thick lines indicate edges in $X$. There are at least $2^i$ minimal sets $Y$ such that $(X \setminus \{x\}) \cup Y$ is an edge dominating set of $G$ since we can take independently $z_{2j-1}$ or $z_{2j}$ for each $1 \le j \le i$. 
    Note that $(X \setminus \set{x}) \cup Y$ needs not to be minimal. }
    \label{fig:EDS-nonCKS}
\end{figure}
% Figure~\ref{fig:EDS-nonCKS} depicts such an example of $G$ and $X$.
This implies that we cannot directly apply the framework in \Cref{sec:framework}. 

In this subsection, we show that it is still possible to approximately enumerate all minimal edge dominating sets in polynomial delay with a constant approximation factor. 
To make our idea clear, we first describe an algorithm for enumerating all minimal edge dominating sets of $G$ without weight constraints and postpone proving an approximation guarantee to the {end} of this subsection. 

The algorithm is also based on the supergraph technique. 
Thus, we need to build a directed graph $\mathcal G = (\mathcal V, \mathcal E)$, where $\mathcal V$ is the set of minimal edge dominating sets of $G$ and $\mathcal E$ is the set of arcs based on two types of neighborhood. The precise definition of neighbors will be given later. 
Intuitively speaking, to construct a directed path from $X$ to $Y$, we repeatedly exchange edges in $X \setminus Y$ with edges outside $X$. 
Let $x = \set{u, v} \in X \setminus Y$.
Note that $Y$ contains at least one edge $e$ incident to $u$ or $v$ as otherwise $x$ is not dominated by $Y$.
If~$Y$ contains two edges $e \in \Gamma(u)$ and $f \in \Gamma(v)$, then the type-I neighbor of~$X$ with respect to $(x, e, f)$ is ``closer'' than $X$ to $Y$.
On the other hand, if $Y$ does not contain any edges in $\Gamma(v)$, we cannot find a type-I neighbor which is ``closer'' to $Y$.
Then we dare to choose a ``detour'' by taking a type-\II neighbor of $X$ that covers $\Gamma(v)$ from the ``outside'' of $\Gamma(v)$. 
From this type-\II neighbor, we can construct a directed path to $Y$ by tracing a type-I neighbor of each minimal edge dominating set on the path.

\newcommand{\we}[1]{W_{u,x,e}(#1)}

To complete the description of $\mathcal G$, we define two types of neighbors of a minimal edge dominating set $X$ defined as follows:
\begin{itemize}
    \item For any edge $x = \set{u, v}$ in $X$, 
        let $e$ and $f$ be edges such that $e \in \Gamma(u) \setminus \set{x}$ and $f \in \Gamma(v) \setminus \set{x}$. 
        Note that $e$ or $f$ might belong to $X$. 
        We say that $Z_1 = \comp{(X \setminus \set{x}) \cup \set{e, f}}$ is the \emph{type-I neighbor of $X$ with respect to $(x, e, f)$}.
        When either $\Gamma(u) \setminus \{x\}$ or $\Gamma(v) \setminus \{x\}$ is empty, say $\Gamma(v) = \{x\}$, then we define $Z_1 = \comp{(X \setminus \set{x}) \cup \set{e}}$, and we also call it the type-I neighbor of $X$ with respect to $(x, e)$.
    \item For any edge $x = \set{u, v}$ in $X$,
        let $e \neq x$ be an edge that shares an end vertex, say $u$, with $x$.
        Define $\we{X}$ as an arbitrary minimal set of edges such that $(X \setminus \set{x}) \cup (\we{X} \cup \set{e})$ is an edge dominating set of $G$ with $\we{X} \cap \Gamma(v) = \emptyset$.
        Note that $\we{X}$ may not be appropriately defined since $\Gamma(v)$ may contain an edge that cannot be dominated by any edge in $E \setminus \Gamma(v)$.
        If $\we{X}$ is well-defined, we say that $Z_2 = \comp{(X \setminus \set{x}) \cup (\we{X} \cup \set{e})}$ is the \emph{type-\II neighbor of $X$ with respect to $(u, x, e)$}. 
\end{itemize}

Note that $\we{X}$ is a minimal set of edges that excludes $x$ and dominates edges of $\Gamma(v) \setminus \set{x}$ not dominated by $(X \setminus \set{x}) \cup \set{e}$.
Thus, every edge in $\we{X}$ has {a} private edge in $\Gamma(v)$.
This also implies $\we{X}$ contains at most $|\Gamma(v)| \le \Delta$ edges.
This observation is useful to analyze the delay of the algorithm in \Cref{theo:meds:poly}. 
We can easily see a type-I and type-\II neighbor of $X$ are always minimal edge dominating sets of $G$. 
\revise{Recall} that for an edge dominating set $X$, $\mu(X)$ is arbitrary minimal edge dominating set that is contained in $X$.

\begin{comment}
Each minimal edge dominating set $X$ has $O(n\Delta^2)$ type-I neighbors and $O(n\Delta)$ type-\II neighbors since the size of $X$ is at most $n$.
Moreover, we can compute each neighbor of $X$ in polynomial time with a fixed function computing $\we{X}$. 
The arc set $\mathcal E$ of $\mathcal G$ is defined by these two types of neighborhood relations. 
\end{comment}

We first show that $\mathcal G$ is strongly connected.
Let $X$ and $Y$ be two distinct minimal edge dominating sets of $G$.
We wish to prove that there is a neighbor $Z$ of $X$ such that $|Z \cup Y| < |X \cup Y|$, that is, $Z$ is ``closer'' to $Y$ than $X$ in $\mathcal G$.
However, $X$ may not have such a neighbor. 
To prove the strong connectivity of $\mathcal G$, we show that if $X \neq Y$, then there always exists a desirable set $Z$ with $|Z \cup Y| < |X \cup Y|$ such that $\mathcal G$ has a directed path from $X$ to $Z$.

\begin{lemmarep}\label{lem:eds:typeI:has:good:neighbor}
    Let $X$ and $Y$ be distinct two minimal edge dominating sets of~$G$,
    {
        $x = \set{u, v}$ be an edge in $X \setminus Y$, and
        $e$ be an edge in $Y$ that incident to $u$. 
    }
    If (a) $Y$ contains an edge in $\Gamma(v)$ or (b) $X$ contains an edge in $\Gamma(v) \setminus \set{x}$, 
    then $X$ has a type-I neighbor $Z$ satisfying $\size{Z \cup Y} < \size{X \cup Y}$. 
\end{lemmarep}

\begin{proof}
    If (a) holds, that is, $Y$ contains an edge $f$ in $\Gamma(v)$,  
    then the type-I neighbor $Z$ of $X$ with respect to $(x, e, f)$ satisfies 
    $|Z \cup Y| \le |(X \setminus \{x\}) \cup \set{e, f} \cup Y| < |X \cup Y|$.
    
    In what follows, we suppose that (a) does not hold but (b) holds.  
    If $\Gamma(v)$ contains exactly one edge $x$, then the type-I neighbor $Z$ of $X$ with respect to $(x, e)$ satisfies $\size{Z \cup Y} \le \size{(X \setminus \set{x}) \cup \{e\} \cup Y} < \size{X \cup Y}$. 
    Thus, in what follows, we assume that $\Gamma(v)$ contains more than one edges. 
    Note that $\size{\Gamma(u)}$ is also more than one since $\Gamma(u)$ has at least two edges $e$ and $x$. 
    
    Suppose that $X$ contains edges in $\Gamma(v) \setminus \set{x}$. 
    As $Y$ has no edges incident to $v$ and $x \notin Y$, $Y$ contains at least one edge $e$ incident to $u$. 
    By the assumption that $X$ has an edge $f$ in $\Gamma(v) \setminus \set{x}$, $x$ has no private edges incident to $v$ at $X$, which implies that every private edge of $x$ at $X$ is incident to $u$. 
    Hence, $Z' = (X \setminus \set{x}) \cup \set{e, f}$ is an edge dominating set of $G$.
    Moreover, $\size{X \cup Y} > \size{Z' \cup Y} \ge \size{\comp{Z'} \cup Y}$ as $x \notin Y$ and $e \in Y$
    {since $Z = \comp{Z'}$ is the type-I neighbor of $X$ with respect to $(x, e, f)$. }
\end{proof}

We assume the condition of~\Cref{lem:eds:typeI:has:good:neighbor} does not hold, 
that is, $Y \cap \Gamma(v) = \emptyset$ and $X \cap \Gamma(v) = \set{x}$.
By this assumption, $Y$ contains at least one edge $e \in \Gamma(u)$ with $e \neq x$. 
Moreover, we assume that $\Gamma(v) \setminus \{x\}$ contains at least one edge as otherwise the type-I neighbor $Z = \comp{(X \setminus \set{x}) \cup \set{e}}$ of $X$ satisfies $|Z \cup Y| < |X \cup Y|$ and hence we are done.
From these assumptions, at $Y$, it follows that 
the edges in $\Gamma(v) \setminus \set{x}$ are dominated by edges not incident to $v$. 
Recall that $\we{X}$ is defined to be an arbitrary minimal set of edges such that $(X \setminus \set{x}) \cup (\we{X} \cup \set{e})$ is an edge dominating set with $\we{X} \cap \Gamma(v) = \emptyset$.
Such a set $\we{X}$ is well-defined since $Y$ contains edges not in $\Gamma(v)$ that {dominate} those in $\Gamma(v) \setminus \set{x}$. 
This observation yields the following lemma.

\begin{lemmarep}\label{lem:path}
    Let $X$ and $Y$ be distinct minimal edge dominating sets of $G$.
    Let $x \in X \setminus Y$ with $x = \set{u, v}$.
    Suppose that $Y \cap \Gamma(v) = \emptyset$, $X \cap \Gamma(v) = \{x\}$, and $Y$ contains an edge $e \neq x$ incident to $u$.
    Then, $X$ has a type-\II neighbor $Z^*_0$ such that there is a directed path from~$Z^*_0$ to a minimal edge dominating set $Z$ with $|Z \cup Y| < |X \cup Y|$. 
\end{lemmarep}
\begin{proof}
    \begin{figure}
        \centering
        \includegraphics[width=0.8\textwidth]{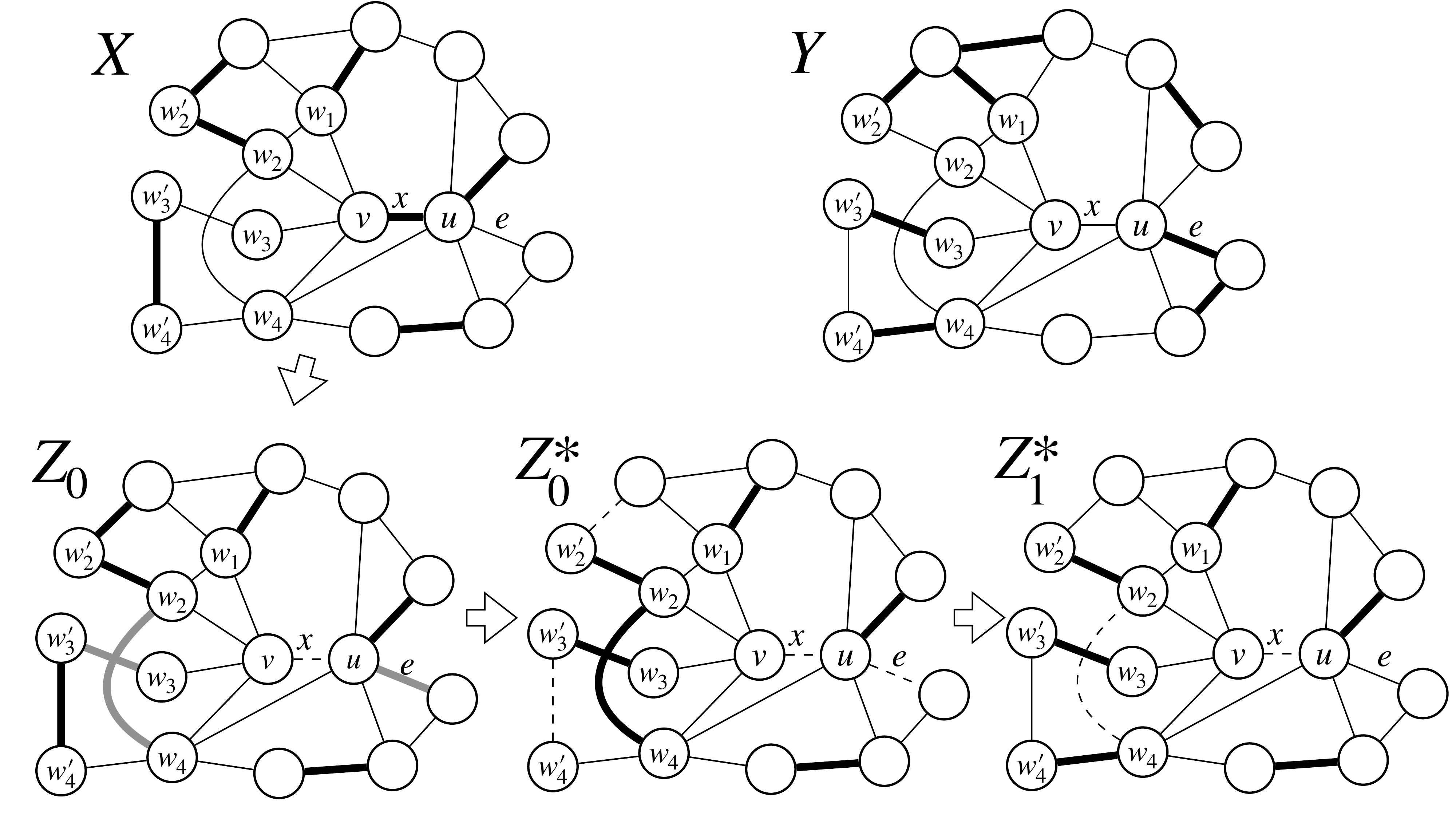}
        \caption{A running example of \Cref{lem:path}.
        {Grey thick lines mean edges added to a current edge dominating set.
        Dotted lines mean edges removed from a current edge dominating set.
        }
        We define $Z_0 = (X\setminus \set{x}) \cup \set{\set{w_3, w'_3}, \set{w_2, w_4}, e}$, where $\we{X} = \set{\set{w_3, w'_3}, \set{w_2, w_4}}$ and $Z^*_0 = \comp{Z_0}$.
        Edges $\we{X}$ dominate undominated edges of $\Gamma(v)$ by $X \setminus \{x\}$.
        $Z^*_1 = \comp{Z^*_0 \cup \set{\set{w_2, w'_2}, \set{w_4, w'_4}}\setminus \set{\set{w_2, w_4}}}$ is the type-I neighbor with respect to $(\set{w_2,w_4},\set{w_2,w'_2},\set{w_4,w'_4})$. Then, we have $|Z^*_1 \cup Y| = 12 < 14 = |X \cup Y|$.}
        \label{fig:trans}
    \end{figure}
    
    Let $e_1, e_2, \cdots, e_t$ be the edges in $\we{X} \setminus Y$ in an arbitrary order, where $t = \size{\we{X} \setminus Y}$, and let $Z^*_0 = \comp{(X \setminus \set{x}) \cup (\we{X} \cup \set{e})}$ be the type-\II neighbor of $X$.
    For each $1 \le i \le t$, we define $Z^*_{i}$ as follows.
    If $e_i \notin Z^*_{i - 1}$, we define $Z^*_i = Z^*_{i-1}$.
    A concrete example of $Z^*_i$ can be found in~\Cref{fig:trans}. 
    
    Suppose otherwise that $e_i \in Z^*_{i-1}$.
    Let $w_i \in N(v)$ {be} one of the end vertices of $e_i$.
    Since $Y$ has no edges incident to $v$ and $e_i \notin Y$, there is an edge $f_i \in Y$ with $f_i \neq e_i$ that dominates edge $\{v, w_i\}$.
    Let $w'_i \neq w_i$ be the other end vertex of $e_i$ (i.e., $e_i = \{w_i, w'_i\}$). 
    If both $w_i$ and $w'_i$ are contained in $N(v)$, then $Y$ has an edge $h_i \in \Gamma(w'_i)\setminus \Gamma(v)$ as otherwise edge $\{v, w'_i\}$ is not dominated by $Y$, contradicting to the fact that $Y$ is an edge dominating set of $G$.
    We define $Z^*_i$ as the type-I neighbor of $Z^*_{i-1}$ with respect to $(e_i, f_i, h_i)$.
    Suppose next that $w'_i$ is not contained in $N(v)$.
    By the minimality of $\we{X}$, $X$ has no edges incident to $w_i$.
    This implies that $X$ contains an edge $h_i \in \Gamma(w'_i)$ dominating $e_i$.
    We then define $Z^*_i$ as the type-I neighbor of $Z^*_{i-1}$ with respect to $(e_i, f_i, h_i)$ if $h_i \neq x$, and otherwise, define $Z^*_i$ as the type-I neighbor of $Z^*_{i-1}$ with respect to $(e_i, f_i, e)$. 
    Note that if $h_i = x$, then one of the end vertices of $e_i$ is $u$, and then $Z^*_i$ is a legal type-I neighbor of $Z^*_{i-1}$.
    Finally, we have $Z^*_t$ and denote it by $Z$.
    Since $f_i \in Y$, $h_i \in X \cup Y$, $e_i \notin X \cup Y$ for every $1 \le i \le t$, and $e \in Y$,
    we have $Z \subseteq X \cup Y$.
    Moreover, as $x \notin Z$, we have $|X \cup Y| > |Z \cup Y|$.
\end{proof}

Finally, we analyze the delay of the algorithm.
Note that every minimal edge dominating set has $\order{n\Delta^2}$ neighbors. 
{To define the neighbors of a minimal edge dominating set $X$, we first choose an edge $e = \set{u, v} \in X$. The number of these choices is at most $n$ since $X$ has at most $n$ edges. For each $e$, we choose at most two edges in $\Gamma(u)\cup \Gamma(v)$. The number of such choices is at most  $\Delta^2$, and hence the number of neighbors of $X$ is $\order{n\Delta^2}$.}
Moreover, computing type-I and type-\II neighbors can be done in $\order{n\Delta}$ time. 
The details of this analysis can be found in the proof of the next theorem.

\begin{theoremrep}\label{theo:meds:poly}
    One can enumerate all minimal edge dominating sets in a graph with $\order{n^2\Delta^3}$ delay. 
\end{theoremrep}
\begin{proof}
    Each minimal edge dominating set of $G$ has $\order{n\Delta^2}$ neighbors and computing each neighbor can be done in $\order{n\Delta}$ time, assuming that the operation $\comp{\cdot}$ is done in time $O(n\Delta)$. 
    Thus, we consider the time complexity to compute $\comp{\tilde{X}}$. 
    Let $\tilde{X}$ be an edge dominating set of $G$.
    Since $\Pi_{\tt eds}$ is monotone property, we can compute $\comp{\tilde{X}}$ as the following procedure:
    \begin{enumerate}
        \item Let $(e_1, \ldots, e_{\size{\tilde{X}}})$ be the edges in $\tilde{X}$.
        \item For each $1 \le i \le \size{\tilde{X}}$, if $\tilde{X}\setminus \set{e_i}$ is an edge dominating set of $G$, then we update $\tilde{X}$ to $\tilde{X} \setminus\set{e_i}$.
    \end{enumerate}
    To check the condition that $\tilde{X} \setminus \{e_i\}$ is an edge dominating set of $G$, 
    it suffices to check whether $\size{(\Gamma(u) \cup \Gamma(v)) \cap (\tilde{X} \setminus \set{e_i})} \ge 1$ for any $\set{u, v} \in E$.
    This can be done in $O(\Delta)$ time for each Step 2 of the above procedure by simply keeping track of the value $\size{(\Gamma(u) \cup \Gamma(u)) \cap \tilde{X}}$ for each $\set{u, v} \in E$ and updating it in $O(\Delta)$ time when removing $e_i$ from $\tilde{X}$.
    Hence, we can compute $\comp{\tilde{X}}$ in $\order{\size{\tilde{X}}\Delta}$ time.
    Finally, we consider the cardinality of $\tilde{X}$ that appears in our algorithm. 
    In our algorithm, we use function $\comp{\cdot}$ when computing type-I or type-\II neighbors of a minimal edge dominating set $X$.
    By the definition of neighbors, $\tilde{X}$ is either of the form $(X \setminus \set{x}) \cup \{e, f\}$ or of the form $(X \setminus \set{x}) \cup (\we{X} \cup \{e\})$.
    For the former case, we have $\size{\tilde{X}} \le \size{X} + 2$ and for the latter case, we have $\size{\tilde{X}} \le \size{X} + \size{\we{X}}$.
    Since $\size{\we{X}}$ is at most $\Delta$, which is observed at the definition of type-\II neighbors, we have $\size{\tilde{X}} \le \size{X} + \Delta$. 
    Moreover, every minimal edge dominating set $X$ is a star forest.
    To see this, it suffices to show that any minimal edge dominating set contains neither a cycle $C_3$ of three vertices nor a path $P_4$ of four vertices as a subgraph. 
    If $X$ contains a $C_3$, then $X \setminus \set{e}$ is also an edge dominating set for any $e \in E(C_3)$.
    If $X$ contains a $P_4 = (e_1, e_2, e_3)$, then $X \setminus \set{e_2}$ is also an edge dominating set, contradicting the minimality of $X$.
    Thus, $X$ is a star forest and hence $|X| \le n - 1$.
\end{proof}

Since the best known delay of enumerating minimal edge dominating set is $\order{n^6}$ due to~\cite{DBLP:conf/wads/KanteLMNU15}, 
our result even improves the delay of minimal edge dominating set enumeration {since our algorithm runs in $\order{n^2\Delta^3} = \order{n^5}$ delay}. 
% Note that the algorithm in~\cite{DBLP:conf/wads/KanteLMNU15} runs in polynomial space, whereas ours runs in exponential space.

%%%%%
%%%%%
%%%%%

In order to extend the above enumeration algorithm to an approximate one for {weight-constrained} minimal edge dominating sets, we slightly modify the algorithm.
According to the definition of the type-\II neighbor, we can arbitrarily choose a minimal edge set $\we{X}$ such that $(X \setminus \set{x}) \cup (\we{X} \cup \set{e})$ is an edge dominating set of $G$ with $\we{X} \cap \Gamma(v) = \emptyset$.
However, for our approximate enumeration, $\we{X}$ must have small weight.
We observe that for every minimal edge dominating set $Y$ of $G$ with $Y \cap \Gamma(v) = \emptyset$, there is an edge set $X^*$ such that $(X \setminus \set{x}) \cup (X^* \cup \set{e})$ is an edge dominating set of $G$, $X^* \cap \Gamma(v) = \emptyset$, and $w(X^*) \le w(Y)$.
This follows from the fact that $(X \setminus \set{x}) \cup (Y \cup \set{e})$ is indeed an edge dominating set of~$G$ and $Y \cap \Gamma(v) = \emptyset$.

To compute a small weight edge set $X^*$ such that $(X \setminus \set{x}) \cup (X^* \cup \set{e})$ is an edge dominating set of $G$ and $X^* \cap \Gamma(v) = \emptyset$, we use a polynomial-time approximation algorithm $\mathcal A$ for the minimum weight edge dominating set problem.
Consider an edge-weighted graph $H = \bigcup_{u' \in U'}\Gamma(u')$, where $U'$ is set of vertices that has incident edges that are not dominated by $(X \setminus \set{x}) \cup \set{e}$.
For each edge $e \in \Gamma(v) \cap E(H)$, we set $w'(e) := \infty$ and for each edge $e \in E(H) \setminus \Gamma(v)$, we set $w'(e) := w(e)$.
It is not difficult to prove that $Y \cap E(H)$ is an edge dominating set of $H$. 
We apply $\mathcal A$ to $(H, w')$ and have an edge set $X'$ such that $(X \setminus \set{x}) \cup (X' \cup \set{e})$ is an edge dominating set of $G$ and $X' \cap \Gamma(v) = \emptyset$.
Moreover, we have $w(X^*) \le w(X') \le c\cdot w(Y)$, where $c$ is the approximation factor of $\mathcal A$.

The above lemma implies that every minimal edge dominating set on a suitable path on $\mathcal{G}$ from $X$ to $Y$ does not have weight more than $w(X) + (c + 1)w(Y)$. 
Note that we can compute $\we{X}$ in polynomial time with approximation factor $c = 2$ by a known approximation algorithm~\cite{Fujito:approximating:2017}.
Hence, the main theorem of this subsection is established. 

\begin{theoremrep}\label{theo:apx:eds}
    There is a polynomial-delay 5-approximate enumeration algorithm for enumerating minimal edge dominating sets using polynomial-time preprocessing.
\end{theoremrep}

\begin{proof}
    Let $X$ be an arbitrary minimal edge dominating set of $G$ with weight at most $ck$.
    We can find $X$ in time polynomial by~\cite{Fujito:approximating:2017} with $c = 2$.
    For every minimal edge dominating set $Y$ of weight at most $k$ distinct from $X$, there is a type-I neighbor $Z$ with $|Z \cup Y| < |X \cup Y|$ or a type-\II neighbor $Z^*_0$ such that there is a directed path $\mathcal P = (Z^*_0, Z^*_1, \ldots, Z^*_t = Z)$ from $Z^*_0$ to $Z$ in $\mathcal G$ with $|Z \cup Y| < |X \cup Y|$.
    Recall that, in the construction of $Z^*_i$, $Z^*_i = \comp{(Z^*_{i-1} \setminus \set{e_i}) \cup \set{f, h}}$ for some $e_i \in Z^*_{i-1} \setminus (X \cup Y)$ and $f, h \in X \cup Y$.
    It follows that $w(Z_i^* \cup X \cup Y) \le w(Z^*_{i-1} \cup X \cup Y)$ for every $1 \le i \le t$.
    Thus, for every $0 \le i \le t$,
    \begin{linenomath}
        \begin{align*}
            w(Z^*_{i} \cup Y) &\le w(Z^*_{0} \cup X \cup Y)\\
            &\le w((X \setminus \set{x}) \cup (\we{X} \cup \set{e}) \cup X \cup Y)\\
            &\le w(X \cup Y) + 2 \cdot w(Y).
        \end{align*}
    \end{linenomath}
    Since we compute $\we{X}$ by a polynomial-time $2$-approximation algorithm of \cite{Fujito:approximating:2017}, we have $\we{X} \le 2\cdot w(Y)$.
    Thus, every ``internal'' minimal edge dominating set $Z^*_i$ satisfies $w(Z^*_i \cup Y) \le w(X \cup Y) + 2\cdot w(Y) \le 5k$.
    Therefore, we can eventually find such a minimal edge dominating set $Z$ by traversing type-I or type-\II neighbors of weight at most $5k$.
    By applying the same argument from $Z$ to $Y$, every ``internal'' minimal edge dominating set~$Z'$, we have $w(Z' \cup Y) \le w(Z \cup Y) + 2 \cdot w(Y) < w(X \cup Y) + 2 \cdot w(Y) \le 5k$. 
    Therefore, by generating a type-I or a type-\II neighbors of weight at most~$5k$, we can, in polynomial delay, enumerate all the minimal edge dominating sets of $G$ of weight at most $k$ with approximation factor $5$. 
\end{proof}

\subsection{Minimal Steiner subgraphs}
Let $G = (V, E)$ be a graph and let $W \subseteq V$ be a set of terminals.
In this subsection, we assume that $|W| \ge 2$.
A subgraph $H$ of $G$ is a \emph{Steiner subgraph} of $(G, W)$ if there is a path between every pair of vertices of $W$ in $H$.
It is easy to see that the property of being a Steiner subgraph of $(G, W)$ is a monotone property over $E$.
Moreover, every minimal Steiner subgraph forms a tree, called a \emph{Steiner tree}.
There are polynomial-delay enumeration algorithms for minimal Steiner trees~\cite{DBLP:journals/is/KimelfeldS08,kobayashi2020polynomial}. 
These algorithms are based on branching and it seems not to be easy to extend these algorithms to an approximate enumeration algorithm.
Therefore, the goal of this subsection is to enumerate all the minimal Steiner trees of $(G, W)$ in polynomial delay with an approximation guarantee.

Let $\Pi_{\tt ST}$ be the property of being a minimal Steiner subgraph of $(G, W)$ and $w: E \to \mathbb Q_{> 0}$ be a weight function.
In this subsection, we may not distinguish between a subgraph and the set of edges in it. 
Unfortunately, the property $\Pi_{\tt ST}$ is not a CKS property.
To see this, we consider a graph with two adjacent terminals $s$ and $t$.
Let $x$ be the edge between $s$ and $t$.
Then, $X = \{x\}$ is a minimal Steiner tree of this graph and every other minimal Steiner tree is an $s$-$t$ path avoiding $x$ between $s$ and $t$ in the graph and hence there {may be} exponentially many such paths in general.
Thus, $\Pi_{\tt ST}$ is not a CKS property.
However, we can solve \IRP{\Pi_{\tt ST}}{w}{k} in polynomial delay since $\mathcal S(E, \Pi_{\tt}, w, k, X, x)$ contains all {paths of weight} at most~$k$ between the two components of $X \setminus \{x\}$, which can be enumerated in polynomial delay by the $k$-best enumeration algorithm for $s$-$t$ paths~\cite{Lawler1972}.
Moreover, since the minimum Steiner tree problem is polynomial-time approximable with factor $1.39$~\cite{Byrka:Steiner:2013}, by plugging these into~\Cref{theo:apx:output},
we obtain an incremental-polynomial time $3.39$-approximate enumeration algorithm for minimal Steiner subgraphs.
In this subsection, we further improve this result by giving a polynomial-delay $2.39$-approximate enumeration algorithm for minimal Steiner subgraphs.

The technique we used here is also the supergraph technique.
To this end, we define the neighbors of a minimal Steiner tree $X \subseteq E$ of $(G, W)$ in the supergraph $\mathcal G$ as follows.
We extend function $\comp{\cdot}$ in such a way that for a vertex set $W' \subseteq V$ and a Steiner subgraph $X$ of $(G, W')$, $\comp{X, W'}$ is a minimal Steiner tree of $(G, W')$ that is a subgraph of $X$.
Let $x$ be an edge in $X$.
Since $X$ is a tree, there are exactly two components $C_1$ and $C_2$ of $X \setminus \set{x}$.
Recall that $C_1$ and $C_2$ are also considered as sets of edges.
Let $e = \set{u, v}$ be an edge incident to a vertex in $C_1$ with $V(C_1) \cap \{u, v\} = \{u\}$ and $e \neq x$. 
Let $C'_1 = \comp{C_1 \cup \set{e}, W_1 \cup \{v\}}$ and $C'_2 = \comp{C_2, W_2}$, where $W_i \coloneqq W \cap V(C_i)$.  
Since $C_1 \cup \set{e}$ and $C_2$ are respectively Steiner subgraphs of $(G, W_1 \cup \{v\})$ and $(G, W_2)$, $C'_1$ and $C'_2$ are well-defined.
Let $P_{x, e}$ be an arbitrary shortest $v$-$w$ path in $G[V \setminus V(C'_1)]$, where $w$ is a vertex in $C'_2$. 
Let us note that $P_{x, e}$ might be a single vertex $v$ when $v = w$.
Observe that $C'_1 \cup C'_2 \cup P_{x, e}$ is a Steiner subgraph of $(G, W)$.
This follows from the fact that $P_{x, e}$ connects two components $V(C'_1)$ and $V(C'_2)$.
Then, $Z = \comp{C'_1 \cup C'_2 \cup P_{x, e}}$ is defined to be a neighbor of $X$ (with respect to $(x, e)$).
The neighbor of $X$ is defined to be the union of all neighbors with respect to possible pairs $x \in X$ and $e \in E\setminus \{x\}$.
{Each neighbor of $X$ is defined by two edges $e \in X$ and $f \in E \setminus X$.
Since $X$ has at most $n$ edges and $E \setminus X$ has $m$ edges,}
the number of neighbors of each minimal Steiner tree $X$ is $\order{nm}$, and we can enumerate all neighbors of $X$ in polynomial time. 

If $\mathcal G$ is strongly connected, then we can enumerate all minimal Steiner trees in polynomial delay. 
We begin with proving that $\mathcal G$ is strongly connected.
Before proving the strong connectivity of $\mathcal G$, we observe the following auxiliary lemma, which are easy to verify.

\begin{lemmarep}\label{lem:ir:st}
    Let $X$ and $Y$ be a pair of minimal Steiner trees of $(G, W)$, $x \in X$, and $C_1$ and $C_2$ be the two components of $X \setminus \{x\}$.
    Then, $Y$ has a path $P$ that connects $C_1$ and $C_2$.
\end{lemmarep}

\begin{proof}
    Since $X$ is a minimal Steiner tree of $(G, W)$, $C_1$ and $C_2$ have terminals $w_1$ and $w_2$, respectively.
    As $Y$ is also a Steiner subgraph of $(G, W)$, $Y$ has a $w_1$-$w_2$ path $P$.
    Thus, $P$ contains a subpath between vertices in $C_1$ and in $C_2$.
\end{proof}

The idea to proving the strong connectivity is as follows.
By~\Cref{lem:ir:st}, there is a path $P$ between $C_1$ and $C_2$.
If the shortest path $P_{x, e}$ is equal to $P$, then we are done: we can find a neighbor $Z \subseteq (X \setminus \set{e}) \cup P$ of $X$, which implies $\size{Z \cup Y} < \size{X \cup Y}$.
Otherwise, by appropriately selecting edges for a neighbor $Z'$ of $Z$, $Z'$ contains ``more edges'' of $P$ than $Z$ in some sense.
By repeating this, we can eventually find a neighbor $Z^*$ satisfying $Z^* \subseteq (X \setminus \set{e}) \cup P$, proving that $\size{Z^* \cup P} \le \size{Z^* \cup Y} < \size{X \cup Y}$.
The key to the strong connectivity of $\mathcal G$ is the following lemma.

\begin{lemmarep}\label{lem:st:scc}
    Let $X$ and $Y$ be a pair of minimal Steiner trees of $(G, W)$.
    Then, $\mathcal G$ has a directed path from $X$ to a minimal Steiner tree $Z$ that satisfies $\size{Z \cup Y} < \size{X \cup Y}$.
\end{lemmarep}

\begin{proof}
    Let $x_1$ be an edge in $X\setminus Y$ and let $C_1$ and $C_2$ be the components of $X \setminus \set{x_1}$.
    By~\Cref{lem:ir:st}, there is a path $P$ in $Y$ between vertices in $C_1$ and in $C_2$.
    Then, $(X \setminus \{x_1\}) \cup P$ is a Steiner subgraph of $(G, W)$.
    In the following, we show that there is a path from $X$ to a minimal Steiner tree $Z \subseteq E$ of $(G, W)$ in $\mathcal G$ with $Z \subseteq (X \setminus \{x_1\}) \cup P$, implying that
    \begin{linenomath}
        \begin{align*}
            \size{Z \cup Y} \le \size{(X \setminus \set{x_1}) \cup P \cup Y)} < \size{X \cup Y}.
        \end{align*}
    \end{linenomath}
    
    Let $e_1, \ldots, e_t$ be the sequence of edges of $P$ appearing in this order such that $e_1$ and $e_t$ are incident to a vertex in $C_1$ and a vertex in $C_2$, respectively.
    For each $1 \le i \le t$, we let $e_i = \{v_i, v_{i+1}\}$.
    As $e_i \in Y$, we have $x_1 \neq e_i$ for $1 \le i \le t$.
    We consider the neighbor $X_1$ of $X$ with respect to $(x_1, e_1)$. 
    Let $C'_1 = \comp{C_1 \cup \{e\}, W_1 \cup \{v\}}$ and $C'_2 = \comp{C_2, W_2}$, where $W_i = W \cap V(C_i)$.
    If $X_1 \subseteq (X \setminus \{x_1\}) \cup P$, we are done.
    Assume otherwise.
    Let $P_{x_1, e_1}$ be the shortest path used in defining $X_1$.
    By the assumption, we have $P_{x_1, e_1} \neq P$.
    Since $P_{x_1, e_1}$ contains $e_1$, $P$ and $P_{x_1, e_1}$ have common edge $e_1$.
    Let $e_{i_1}$ be the edge in $P$ such that for every $1 \le i \le i_1$, $e_i \in P_{x_1, e_1}$ and $e_{i_1} \notin P_{x_1, e_1}$.
    By the definition of $P_{x_1, e_1}$, we have $i_1 > 1$.
    Let $x_2 \neq e_{i_1 - 1}$ be the edge of $P_{x_1, e_1}$ that shares end vertex $v_{i_1}$ with $e_{i_1}$.
    Then, observe that the neighbor $X_2$ of $X_1$ with respect to $(x_2, e_{i_1})$ has all edges $e_1, \ldots, e_{i_1}$.
    This follows from the fact that we define $P_{x_2, e_{i_1}}$ as a shortest path between the two components of $C''_1 = \mu((X_1 \setminus \set{x_2}) \cup \{e_{i_1}\}, W_1 \cup \set{v_{i_1+1}})$ and $C''_2 = \mu(X_1\setminus\set{x_2}, W_2)$ and $v_{i_1+1}$ is considered as a terminal, which implies that $e_1, \ldots, e_{i_1}$ are all bridges in $C''_1$ and thus these edges are contained in $X_2$.
    Moreover, we claim that $X_2 \setminus C_1 \cup C_2 \cup \{e_1, \ldots, e_{i_1}\} \cup P_{x_2, e_{i_1}}$.
    This follows from the following facts: $C''_1 \subseteq C_1 \cup \{e_1, \ldots, e_{i_1}\}$; $C''_2 \subseteq C_2$. 

    By repeatedly applying this argument, we can eventually find a minimal Steiner tree $Z$ that contains all edges in $P$ and hence we have $Z \subseteq C_1 \cup C_2 \cup \{e_1, \ldots, e_t\} \subseteq (X \setminus \{x_1\}) \cup P$.
\end{proof}

Thus, {$\mathcal G$ is strongly connected and}
we can enumerate all minimal Steiner trees in polynomial delay. 
Next, we prove that we can approximately enumerate all minimal Steiner trees with the approximation factor $c + 1$.
Suppose that we have a polynomial-time $c$-approximation algorithm for finding a minimum weight Steiner tree.
Let $X$ be a minimal Steiner tree of $(G, W)$ whose weight is at most $ck$.
By \Cref{lem:st:scc}, $\mathcal G$ has a directed path from $X$ to every minimal Steiner tree $Y$ with weight at most $k$.
To show that we can approximately enumerate all solutions, we prove that every internal node on the directed path has weight at most $(c + 1)k$.

\begin{lemmarep}\label{lem:scc:apx:st}
    Let $X$ and $Y$ be a pair of minimal Steiner tree of $(G, W)$ with weight at most $ck$ and $k$, respectively.
    Then, $\mathcal G$ has a directed path from $X$ to $Y$ on which any minimal Steiner tree has weight at most $(c + 1)k$.
\end{lemmarep}
\begin{proof}
    Let $Z$ be a minimal Steiner tree of $(G, W)$ defined in \Cref{lem:st:scc}.
    Then, $\mathcal G$ has a directed path from $X$ to $Z$. 
    In the proof of \Cref{lem:st:scc}, we have shown that $X_j \subseteq C_1 \cup C_2 \cup \{e_1, \ldots, e_{i_j-1}\} \cup P_{x_j, e_{i_j}}$ for every minimal Steiner tree $X_j$ on the path from $X$ to $Z$.
    Since $P_{x_j, e_{i_j}}$ is chosen as a shortest path, its weight is at most $w(\{e_{i_j - 1}, \ldots, e_t\})$.
    Thus, we have
    \begin{linenomath}
        \begin{align*}
            w(X_j) &\le w(C_1 \cup C_2 \cup \{e_1, \ldots, e_{i_j-1}\} \cup P_{x_j, e_{i_j}}) \\
            &< w(X \cup P) \\
            &\le w(X \cup Y) \\
            &\le (c+1)k.
        \end{align*}
    \end{linenomath}
    By applying \Cref{lem:st:scc} again for $Z$ and $Y$, every solution $Z'$ from $Z$ to $Y$, we also have
    \begin{linenomath}
        \begin{align*}
            w(Z') \le w(Z \cup Y) < w(X \cup Y) \le (c+1)k,
        \end{align*}
    \end{linenomath}
    as $Z \subseteq X \cup Y$, completing the proof of lemma.
\end{proof}

Note that there can be a directed path from $X$ to $Y$ in $\mathcal G$ that has an internal node corresponding to a minimal Steiner tree with weight more than $(c+1)k$.
In our enumeration algorithm, we simply ignore such solutions and, by~\Cref{lem:st:scc,lem:scc:apx:st}, the ``trimmed'' supergraph still has a directed path from any solution with weight at most $ck$ to any solution with weight at most~$k$.
Therefore, we obtain the following theorem.

\begin{theorem}\label{theo:apx:st}
    There is a polynomial-delay $2.39$-approximate enumeration algorithm for enumerating minimal Steiner trees using polynomial-time preprocessing {since a polynomial-time approximation algorithm with approximation factor $1.39$ is known~\cite{Byrka:Steiner:2013}}.
\end{theorem}

{
\section{Concluding remarks}
    In this paper, we propose a new concept, approximate enumeration.
    \revise{For a combinatorial optimization problem,}
    an enumeration algorithm approximately enumerates $\mathcal S$ if it enumerates a set of feasible solutions $\mathcal R'$ with $\mathcal S \subseteq \mathcal R' \subseteq \mathcal R$ without duplication\revise{, where $\mathcal R$ is the collection of all feasible solutions and $\mathcal S$ is the subcollection of feasible solutions of weight at most $k$.}
    In other words, we allow enumeration algorithms to output some feasible solutions in $\mathcal R \setminus \mathcal S$ but forbid them to output infeasible solutions in $2^{U} \setminus \mathcal R$. 
}

{
    As technical contributions, 
    we show that the input-restricted problem is also essential for 
    designing efficient approximate enumeration algorithms for minimal subsets.
    The input-restricted problem can be solved efficiently with a constant approximation factor in order to achieve approximate enumeration with a constant approximation factor.
}

{
    As results beyond this framework, 
    minimal edge dominating sets and minimal Steiner trees can be enumerated approximately with polynomial delay.
    These properties are not CKS properties.
    Therefore, we cannot obtain polynomial-delay enumeration algorithms with a simple application of our framework. 
    However, by carefully defining the neighborhood of each solution,
    we obtain an enumeration algorithm with polynomial delay and a constant approximation factor.
}

{
    As for future research, we consider two possible directions. One direction is to enumerate \emph{exactly} all the minimal subsets of a certain property $\Pi$ that satisfy a weight constraint. 
    This immediately requires a polynomial-time algorithm for finding a minimum set satisfying $\Pi$.
    % The problems we addressed in this paper are \NP-hard.
    % Therefore, exact enumeration is not possible unless $\P = \NP$.
    % However, it is an interesting question whether exact enumeration is possible for problems that are easy to optimize.
}

{
    Another direction is to enumerate all \revise{\emph{maximal}} subsets with weight constraints approximately.
    Our technique cannot be extended to enumerate all maximal subsets in a straightforward way. 
    In the case of minimal subset enumeration, the cardinality of the union of two ``small'' minimal subsets is also small.
    On the other hand, in the case of maximal subset enumeration,
    the cardinality of the intersection of two ``large'' maximal subsets can be ``small''.
    % For maximal subsets $X$ and $Y$, no matter how large $w(X)$ and $w(Y)$,
    % the weight of $w(X \cap Y)$ can be arbitrarily small.
    Thus, our approach cannot be used in this case and
    it would be interesting to devise a new technique to overcome this obstacle.
}

\section*{Acknowledgement}
This work was partially supported by JST, CREST Grant Number JPMJCR18K3, Japan and 
JSPS Kakenhi Grant Numbers JP19H01133, JP19K20350, JP20K19742, JP20H05793, and JP21K17812.
%%%%%%%%%%%%%%%%%%%% bibliography %%%%%%%%%%%%%%%%%%%% 

\bibliographystyle{plain}
\bibliography{main}

\end{document}